\newcommand{\bomega}{\mbox{\boldmath{$\omega$}}}
\newcommand{\R}{\mathbb{R}}
\newcommand{\eref}[1]{{(\ref{#1})}}
\newcommand{\mbf}[1]{{\mathbf{#1}}}
\newcommand{\Li}[2]{{{\mathrm{Li}}\left(#1,#2\right)}}
\newcommand{\LiTyG}[2]{{{\widetilde{\mathrm{Li}}}\left(#1,#2\right)}}
\newtheorem{thm}{Theorem}
\newtheorem{cor}[thm]{Corollary}
\newtheorem{lem}[thm]{Lemma}
\newtheorem{defn}[thm]{Definition}
\newtheorem{hypo}[thm]{Hypothesis}
\begin{document}
\title{\bf Interplay between the Beale-Kato-Majda theorem and the analyticity-strip method to investigate numerically the incompressible Euler singularity problem} 

\author{Miguel D. Bustamante}
\affiliation{School of Mathematical Sciences,
University College Dublin,
Belfield, Dublin 4, Ireland}
\author{Marc Brachet}
\affiliation{Laboratoire de Physique Statistique de l'Ecole Normale
Sup{\'e}rieure, \\
associ{\'e} au CNRS et aux Universit{\'e}s Paris VI et VII,
24 Rue Lhomond, 75231 Paris, France.}
\date{\today}

\bigskip

\begin{abstract}
Numerical simulations of the incompressible Euler equations are performed using the Taylor-Green vortex initial conditions and resolutions up to $4096^3$.
The results are analyzed in terms of the classical analyticity strip method and
Beale, Kato and Majda (BKM) theorem.
A well-resolved acceleration of the time-decay of the width of the analyticity strip $\delta(t)$ is observed at the highest resolution for $3.7<t<3.85$ while preliminary $3D$ visualizations show the collision of vortex sheets.
The BKM criterium on the power-law growth of supremum of the vorticity, applied on the same time-interval, is not inconsistent with the occurrence of a singularity around $t \simeq 4$.

These new findings lead us to investigate how fast the analyticity strip width needs to decrease to zero
in order to sustain a finite-time singularity consistent with the BKM theorem.
A new simple bound of the supremum norm of vorticity in terms of the energy spectrum is introduced and used to combine the BKM theorem with the analyticity-strip method.
It is shown that a finite-time blowup can exist only if $\delta(t)$ vanishes sufficiently fast at the singularity time. In particular, if a power law is assumed for $\delta(t)$ then its exponent must be greater than some critical value, thus providing a new test that is applied to our $4096^3$ Taylor-Green numerical simulation.

Our main conclusion is that the numerical results are not inconsistent with a singularity but that higher-resolution studies are needed to extend the time-interval on which a well-resolved power-law behavior of $\delta(t)$ takes place, and check whether the new regime is genuine and not simply a crossover to a faster exponential decay.
\end{abstract}

\pacs{47.10.A,47.11.Kb,47.15.ki}
\maketitle
\section{Introduction} \label{Sec:Intro}
A central open question in classical fluid dynamics is whether the incompressible three-dimensional Euler equations with smooth initial conditions develop a singularity after a finite time.
A key result was established in the late eighties by Beale, Kato and Majda (BKM). The BKM theorem \cite{BKM84} states that blowup (if it takes place) requires the time-integral of the supremum of the vorticity to become infinite (see the review by Bardos and Titi \cite{BardosTiti2007}).
Many studies have been performed using the BKM result to monitor the growth of the vorticity supremum in numerical simulations in order to conclude yes or no regarding the question of whether a finite-time singularity might develop. The answer is  somewhat mixed, see {\it e.g.} references \cite{Kerr2005,Houli2006,BustamanteKerr2008} and the recent review by Gibbon \cite{Gibbon2008}.
Other conditional theoretical results, going beyond the BKM theorem, were obtained in a pioneering paper by Constantin, Fefferman and Majda \cite{ConstFetMaj96}. They showed that the evolution of the direction of vorticity posed geometric constraints on potentially singular solutions for the 3D Euler equation  \cite{ConstFetMaj96}. This point of view was further developed by Deng, Hou and Yu  in references \cite{Denghouyu2005} and \cite{Denghouyu2006}.

An alternative way to extract insights on the singularity problem from numerical simulations is the so-called analyticity strip method \cite{SulemSulemFrisch1983}.
In this method the time is considered as a real variable and the space-coordinates are considered as complex variables.
The so-called ``width of the analyticity strip'' $\delta (\geq0)$ is defined as the imaginary part of the complex-space singularity of the velocity field nearest to the real space.
The idea is to monitor $\delta(t)$ as a function of time $t$.
This method uses the rigorous result \cite{BardosBenachour77} that a real-space singularity of the Euler equations occurring at time $T_*$ must be preceded by a non-zero $\delta(t)$ that vanishes at $T_*$. Using spectral methods \cite{Got-Ors}, $\delta(t)$ is obtained directly from the high-wavenumber exponential fall off of the spatial Fourier transform of the solution \cite{frischbook}. This method effectively provides a ``distance to the singularity'' given by $\delta(t)$ \cite{Frisch:notout}, which cannot be obtained from the general BKM theorem.

Note that the BKM theorem is more robust than the analyticity-strip method in the sense that it applies to velocity fields that do not need to be analytic. However, in the present paper we will concentrate on initial conditions that are analytic. In this case, there is a well-known result that states:
\emph{In three dimensions with periodic boundary
conditions and analytic initial conditions, analyticity is preserved
as long as the velocity is continuously differentiable}\/ ($C^1$)
\emph{in the real domain}\/ \cite{BardosBenachour77}.
The BKM theorem allows for a strengthening of this result: analyticity is actually preserved as long as the vorticity is finite \cite{Frisch:notout}.

The analyticity-strip method has been applied to probe the Euler singularity problem
using a standard periodic (and analytical) initial data: the so-called
Taylor-Green (TG) Vortex \cite{TG1937}.
We now give a short review of what is already known about the TG dynamics.
Numerical simulations of the TG flow were performed with resolution increasing over the years, as more computing power became available.
It was found that except for very short times and for as long as $\delta(t)$ can be reliably measured, it displays almost perfect exponential decrease. Simulations performed in $1982$ on a grid of $256^3$ points obtained $\delta(t)\sim 2.60 \, e^{-t/0.57}$ (for $t$ up to $2.5$) \cite{BRACHET:1983p4817}. This behavior was confirmed in $1992$ at resolution $864^3$ \cite{Brachet1992}. More than $20$ years after the first study, simulations performed on a grid of $2048^3$ points yielded $\delta(t)\sim 2.70 \,  e^{-t/0.56}$ (for $t$ up to $3.7$) \cite{CichowlasBrachet2005}.
If these results could be safely extrapolated to later times then the Taylor-Green vortex would never develop a real singularity \cite{frischbook}.

The present paper has two main goals. One is to report on and analyze new simulations of the TG vortex that are performed at resolution $4096^3$. These new simulations show, for the first time, a well-resolved change of regime, leading to a faster decay of $\delta(t)$ happening at a time where preliminary $3D$ visualizations show the collision of vortex sheets \footnote{This new behavior of the Euler TG vortex is somewhat similar to the acceleration in the decrease of $\delta$ that was reported in MHD for the so-called IMTG initial data at resolution $2048^3$ in reference \cite{PhysRevE.78.066401}.}.
The second goal of this paper is to answer the following question, motivated by the new behavior of the TG vortex: how fast does the analyticity-strip width have to decrease to zero in order to sustain a finite-time singularity, consistent with the BKM theorem?
To the best of our knowledge, this question has not been formulated previously.

To answer this question we introduce a new bound of the supremum norm of vorticity in terms of the energy spectrum. We then use this bound to combine the BKM theorem with the analyticity-strip method.
This new bound is sharper than usual bounds.
We show that a finite-time blowup exists only if the analyticity-strip width goes to zero sufficiently fast at the singularity time.
If a power-law behavior is assumed for $\delta(t)$ then its exponent must be greater than some critical value.
In other words, we provide a powerful test that can potentially rule out the existence of a finite-time singularity in a given numerical solution of Euler equations.
We apply this test to the data from the latest $4096^3$ Taylor-Green numerical simulation in order to see if the change of behavior in $\delta(t)$ can be consistent with a singularity.

The paper is organized as follows: Section \ref{Sec:Theo} is devoted to the basic definitions, symmetries and numerical method related to the inviscid Taylor-Green vortex.

In Sec. \ref{Sec:Numerics_Classical}, the new high-resolution Taylor-Green results are presented and are analyzed classically in terms of analyticity-strip method and BKM.

In Sec. \ref{Sec:As_Bkm}, the analyticity-strip method and BKM Theorem are bridged together. The section starts with heuristic arguments that are next formalized in a mathematical framework of definitions, hypotheses and theorems.

In Sec. \ref{Sec:NewAnal}, our new theoretical results are used to analyze the behavior of the decrement.

Section  \ref{Sec:Conclusion} is our conclusion.

The generalization to non TG-symmetric periodic flows of the results presented in Sec. \ref{Sec:As_Bkm} are described in an appendix.

\section{Definition of the system\label{Sec:Theo} }

\subsection{Basic definitions}

Let us consider the 3D incompressible Euler equations for the velocity field
$\mbf{u}(x,y,z,t) \in \R^3$ defined for $(x,y,z) \in \R^3$ and in a time interval $t \in [0,T)$:
\begin{eqnarray}
\label{eq:Euler}
 \frac{\partial \mbf{u}}{\partial t} + \mbf{u} \cdot \nabla \mbf{u} = - \nabla p\,, \qquad \quad \nabla \cdot \mbf{u} = 0.
\end{eqnarray}

The Taylor-Green (TG) flow \cite{TG1937} is defined by the $2 \pi$-periodic initial data $\mbf{u}(x,y,z,0)={\bf u}^{\mathrm{TG}}(x,y,z)$, where
\begin{equation*}
{\bf u}^{\mathrm{TG}}=(\sin(x) \cos(y) \cos(z),-\cos(x) \sin(y)\cos(z), 0).
\end{equation*}

The periodicity of ${\bf u}$ allows us to
define the (standard) Fourier representation
\begin{eqnarray}
 \widehat{\mbf{u}}(\mbf{k},t) &=& \frac{1}{(2 \pi)^3} \int_D {\bf u}({\bf x},t) \exp(-i \mbf{k}{\bf x}) d^3x \\
{\mbf{u}}(\bf{x},t) &=& \sum\limits_{\mbf{k}\in \mathbb{Z}^3}  \widehat{\mbf{u}}(\mbf{k},t) \exp(i \mbf{k}{\bf x})  \label{eq:Four2},
\end{eqnarray}

The kinetic energy spectrum $E(k,t)$ is defined as the sum over spherical shells
\begin{equation}
\label{eq:spectrum}
E(k,t) = \frac{1}{2} {\displaystyle \sum_{\mbf{k} \in \mathbb{Z}^3 \atop k-1/2  < |\mbf{k}| < k+1/2 }} |\widehat{\mbf{u}}({\bf k},t)|^2,
\end{equation}
and the total energy
\begin{equation*}
E =\frac{1}{2 (2 \pi)^3} \int_D {\left|{\bf u}({\bf x},t)\right|^2} d^3x = \frac{1}{2} {\displaystyle \sum_{\mbf{k} \in \mathbb{Z}^3}} |\widehat{\mbf{u}}({\bf k},t)|^2 ,
\end{equation*}
is independent of time because ${\bf u}$ satisfies the 3D Euler equations (\ref{eq:Euler}).

\subsection{Symmetries}
\label{subsec:symm}
A number of the symmetries of ${\bf u}^{\mathrm{TG}}$ are compatible with the equation of motions. They are, first,
rotational symmetries of angle $\pi$ around the axis $(x=z=\pi/2)$ and $(x=z=\pi/2)$; and of angle $\pi/2$ around the axis $(x=y=\pi/2)$.
A second set of symmetries corresponds to planes of mirror symmetry: $x=0,\pi$, $y=0,\pi$ and $z=0,\pi$. On the symmetry planes, the velocity ${\bf u}^{\mathrm{TG}}$ and the vorticity ${\bomega^\mathrm{TG}}={\bf\nabla} \times {{\bf u}^{\mathrm{TG}}}$ are (respectively) parallel and perpendicular to these planes that form the sides of the so-called impermeable box which confines the flow.

It is demonstrated in reference \cite{BRACHET:1983p4817}
that these symmetries imply that the Fourier expansions coefficients of the velocity field in Eq. \eqref{eq:Four2}
$\widehat{\mbf{u}}(m,n,p,t)$ vanishes unless $m,n,p$ are either all even
or all odd integers. This fact can be used in a standard way \cite{BRACHET:1983p4817} to reduce
memory storage and speed up computations.

\subsection{Numerical method}

The Euler Equations \eqref{eq:Euler} are solved numerically
using standard \citep{Got-Ors} pseudo-spectral methods
with resolution $N$.
Time marching is done with a second-order
Runge-Kutta finite-difference scheme.
The solutions are dealiased by suppressing, at each time step, the
modes for which at least one wave-vector component exceeds
two-thirds of the maximum wave-number $N/2$
(thus a $4096^3$ run is truncated at $k>k_{\max} \equiv 1365$).

The simulations reported in this paper were performed using a special purpose symmetric parallel code developed from that described in \cite{PhysRevE.78.066401,PLBMR2010}.
The workload for a timestep is (roughly) twice that of a general periodic code running at a quarter of the resolution.
Specifically, at a given computational cost, the ratio of
the largest to the smallest scale available to a computation
with enforced Taylor-Green symmetries is enhanced by a
factor of $4$ in linear resolution. This leads to a factor of $32$ savings in total computational
time and memory usage.
The code is based on FFTW and a hybrid MPI-OpenMP scheme derived from that described in \cite{Mininni:GHOST}.
The runs were performed on the IDRIS BlueGene/P machine.  At resolution $4096^3$ we used $512$ MPI processes,
each process spawning $4$ OpenMP threads.

\section{Numerical results and classical analysis}
\label{Sec:Numerics_Classical}
\begin{figure}[htbp]
\begin{center}
\includegraphics[height=10.0cm]{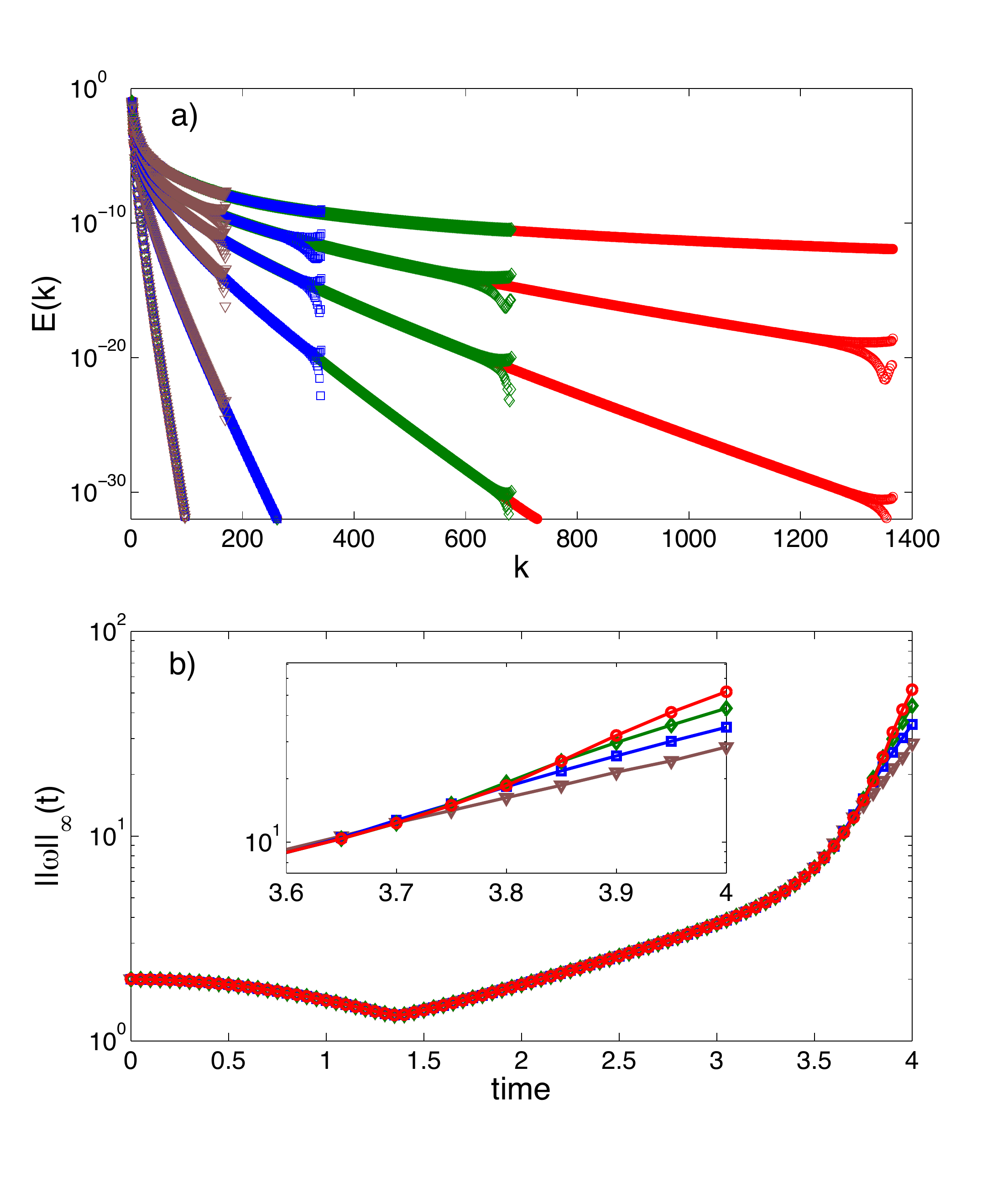}
\caption{(Color online) Temporal evolution of TG flow: a) energy spectra $E(k,t)$ (see Eq. \eqref{eq:spectrum}) at $t=(1.3, 1.9, 2.5, 2.9, 3.4, 4.0)$ and  b) maximum of vorticity $\left\| \bomega(\cdot,t) \right\|_{\infty}$. Results from runs performed at different resolutions are displayed together: $512^3$ (brown triangles), $1024^3$ (blue squares), $2048^3$ (green diamonds) and $4096^3$ (red circles).}
\label{Fig:Energy_Spectra_maxvort}
\end{center}
\end{figure}

\subsection{Energy spectra, maximum vorticity and collision of vortex sheets}
Runs were performed at resolutions $512^3$, $1024^3$, $2048^3$ and $4096^2$.

The behavior of the energy spectra \eqref{eq:spectrum} and the spatial maximum of the norm of the vorticity $\bomega=\nabla \times \mbf{u}$ are presented in Fig. \ref{Fig:Energy_Spectra_maxvort}.

\begin{figure}[htbp]
\begin{center}
\includegraphics[height=9.5cm]{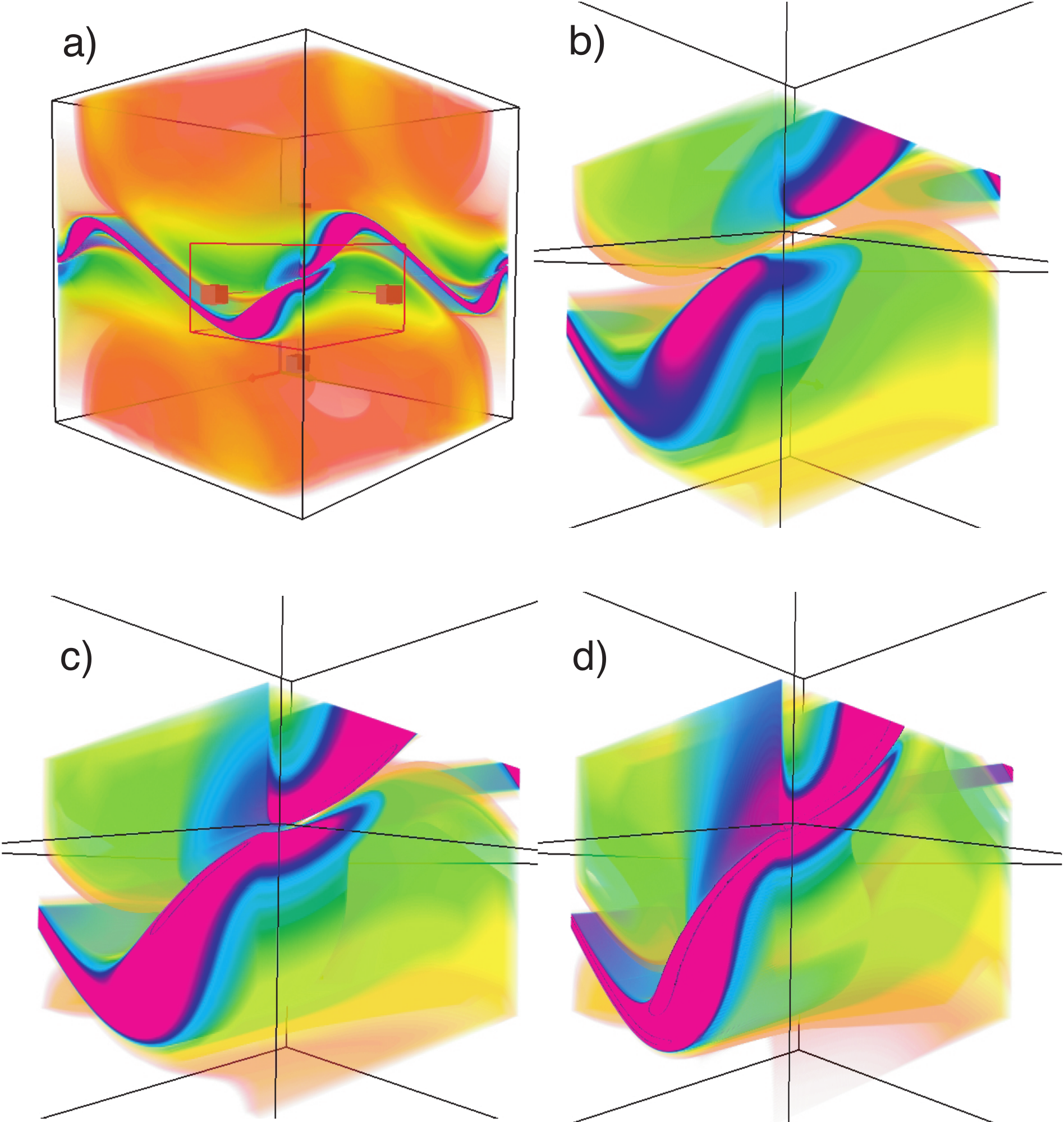}
\caption{(Color online) $3D$ visualization of TG vorticity $|\nabla \times \mbf{u}|$ at resolution $4096^3$: a) full impermeable box $0\leq x \leq \pi$, $0\leq y\leq \pi$ and $0\leq z\leq \pi$ at $t=3.75$.  Zooms over the subbox marked near $x=y=\pi$, $z=\pi/2$ are displayed in b) at $t=3.5$, in c) at $t=3.75$ and in d) at $t=4.0$.} 
\label{Fig:Vort_3D_Viz}
\end{center}
\end{figure}
It is apparent in Fig. \ref{Fig:Energy_Spectra_maxvort}(a) that resolution-dependent even-odd oscillations are present, at certain times, on the TG energy spectrum. Note that this behavior is produced when the tail of the spectrum rises above the round-off error $\sim 10^{-32}$. This phenomenon can be explained in terms of a {\it resonance} \cite{Samriddhi}, along the lines developed in reference \cite{FrischTygers2011}. In practice we will deal with this problem by averaging the spectrum over shells of width
$\Delta k=2$. Apart from this it can be seen that spectra computed using different resolutions are in good agreement for all times.

In contrast, it is visible in Fig. \ref{Fig:Energy_Spectra_maxvort}(b) that the maximum of vorticity $\left\| \bomega(\cdot,t) \right\|_{\infty}$ computed at different resolutions are in agreement only up to some resolution-dependent time (see the inset). The fact that $\left\| \bomega(\cdot,t) \right\|_{\infty}$ at a given time $t>3.7$ decreases if one truncates the higher wavenumbers of the velocity field (see Fig. \ref{Fig:Energy_Spectra_maxvort}(b)) strongly suggests that $\left\| \bomega(\cdot,t) \right\|_{\infty}$ has significant contributions coming from high-wavenumbers modes. This forms the basis of the heuristic argument presented below in Sec. \ref{subsec:Heur}.

Figure \ref{Fig:Vort_3D_Viz} shows $3D$ visualizations (using the VAPOR \footnote{http://www.vapor.ucar.edu} software) of the high vorticity regions in the impermeable box, corresponding to the $4096^3$ run at late times. A thin vortex sheet is apparent in Fig.\ref{Fig:Vort_3D_Viz}(a) on the vertical faces $x=0$, $\pi$ and $y=0$, $\pi$ of the impermeable box.

The emergence of this thin vortex sheet is well understood by simple dynamical arguments about the flow on 
the faces of the impermeable box that were first given in reference \cite{BRACHET:1983p4817}. 
We now briefly review these arguments. 
The initial vortex on the bottom face 
is first forced by centrifugal action to spiral outwards toward the edges and then up the side faces.
A corresponding outflow on 
the top face and downflow from the top edges onto the side faces leads to a 
convergence of fluid near the horizontal centreline of each side face, from where it 
is forced back into the centre of the box and subsequently back to the top and bottom faces. 
The vorticity on the side faces is efficiently produced in the zone of convergence, 
and builds up rapidly into a vortex sheet (see Figs. 1 and 2 of reference \cite{BRACHET:1983p4817} and Fig. 8 of reference \cite{Brachet1992}).

While these considerations explain the presence of the thin vortex sheet in Fig.\ref{Fig:Vort_3D_Viz}(a), the  dynamics presented in Fig.\ref{Fig:Vort_3D_Viz}(b-d) also involves the collision of vortex sheets happening near the edge $x=y=\pi$, close to $z=\pi/2$. Note that, as stated above in Sec. \ref{subsec:symm}, the vortex lines are perpendicular to the faces of the impermeable box. Thus, because the collision takes place near an edge, the corresponding vortex lines must be highly curved, with strong variations of the direction of vorticity.
The geometric constraints on potential singularities posed by the evolution of the direction of vorticity developed in references \cite{ConstFetMaj96,Denghouyu2005,Denghouyu2006} could be applied to the situation described in Fig. \ref{Fig:Vort_3D_Viz}. However, such an analysis goes beyond the BKM theorem and involves extensive post-processing of very large datasets. This task is thus left for further work and we concentrate here on simple BKM diagnostics for the vorticity supremum and analyticity strip analysis of energy spectra.

\subsection{Analyticity-strip analysis of energy spectra}\label{Sec:NumEFits}

The analyticity-strip method \cite{SulemSulemFrisch1983} is based on the fact that when the velocity field is analytic in space the energy  spectrum satisfies $E(k,t)  \propto e^{- 2\, k\,\delta(t)}$ in the asymptotic `ultraviolet region' $k \gg 1,$
with a proportionality factor that may contain an algebraic decay in $k,$ a multiplicative function of time and, depending on the complexity of the physical flow, even an oscillatory (in $k$) modulation \cite{CichowlasBrachet2005}.
\begin{figure}[htbp]
\begin{center}
\includegraphics[height=7.cm]{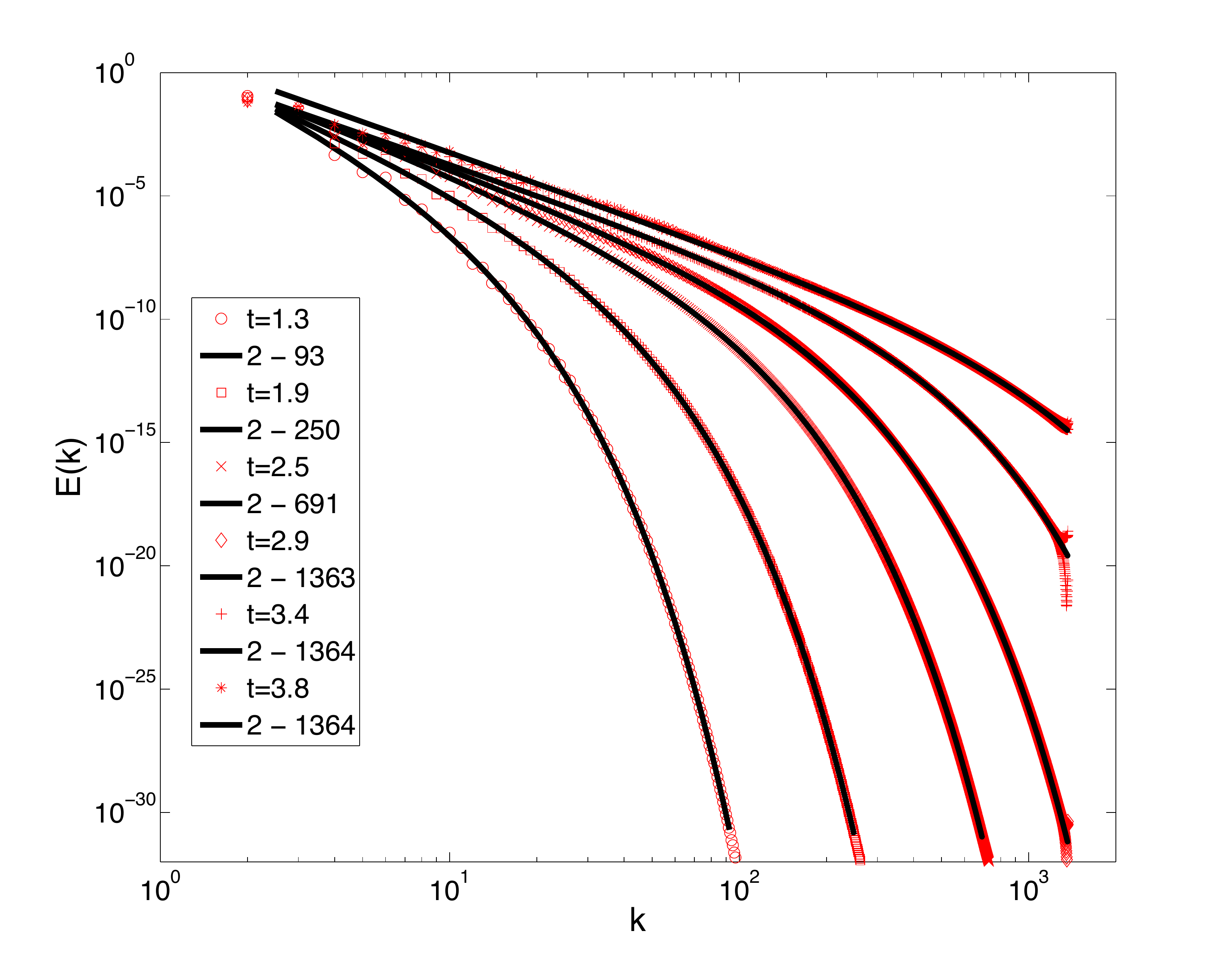}
\caption{(Color online) Comparison of fit  \eqref{eq_fitTG} (solid black line) and spectrum at resolution $4096^3$ (red markers); times and fit intervals are indicated in the legend.}
\label{Fig:Fit_comp}
\end{center}
\end{figure}

\begin{figure*}[htbp]
\begin{center}
\includegraphics[height=12.cm]{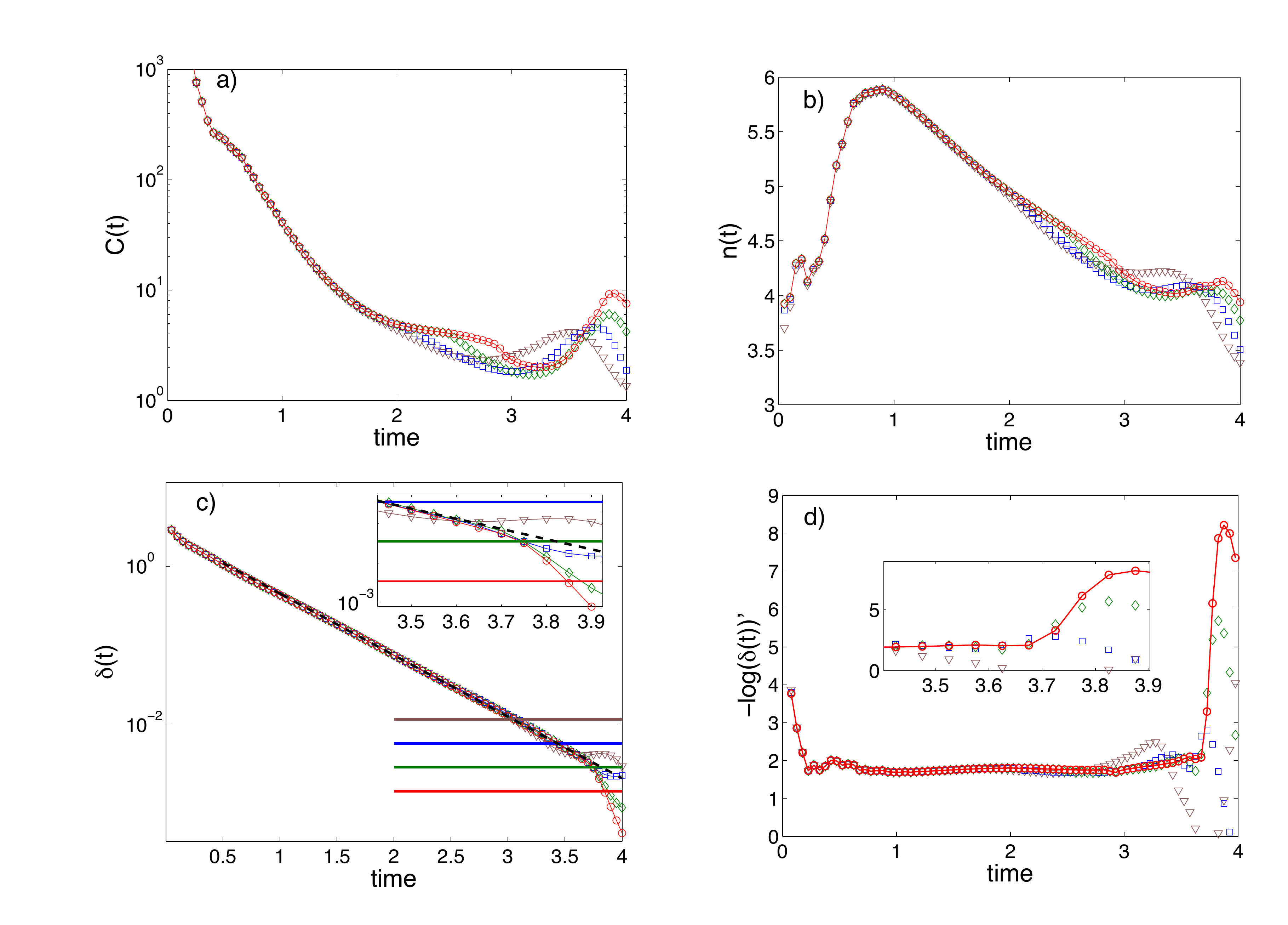} 
\caption{(Color online) Time evolution of energy spectrum fit parameters (see Eq. \eqref{eq_fitTG} and Fig. \ref{Fig:Fit_comp}(a)) constant $C$; b) prefactor $n$;  c) decrement $\delta$ (horizontal lines : $\delta k_{\max}=2$, dashed black line: exponential law \eqref{eq:expdelta});  d) decay rate $-d(\ln(\delta(t))/dt$. Results corresponding to different resolutions are displayed together: $512^3$ (brown triangles), $1024^3$ (blue squares), $2048^3$ (green diamonds) and $4096^3$ (red circles).\label{Fig:Fit_Evolution}}
\end{center}
\end{figure*}
The basic idea is thus to assume that $E(k,t)$ can be well approximated by a function of the form
\begin{equation*}
E(k,t) \approx C(t) \,k^{-n(t)}\,e^{- 2\, k\,\delta(t)}\,,
\end{equation*}
in some wave numbers interval between $1$ and $k_{\max} = \lfloor N/3\rfloor$ (the maximum wavenumber permitted by the numerical resolution $N$).
The common procedure to determine $C(t), n(t), \delta(t)$ is to perform a least-square fit at each time $t$ on
the logarithm of the energy spectrum $E(k,t)$, using the functional
form
\begin{equation}
\label{eq_fitTG}
\ln E(k,t) = \ln C(t) - n(t)\, \ln k  - 2 k \,\delta(t) ~.
\end{equation}
The error on the fit interval ${k_1\leq k\leq k_2}$,
\begin{equation*}
\chi ^2(t)={\displaystyle \sum\limits_{k=k_1}^{k_2}}\left(\ln E(k,t) - \ln C(t) + n(t)\, \ln k  + 2 k \,\delta(t) \right)^2,
\end{equation*}
is minimized by solving the equations
$\partial \chi ^2/\partial C =0$, $\partial \chi ^2/\partial n =0$ and $\partial \chi ^2/\partial \delta =0$.
Note that these equations are linear in the parameters $\ln C(t)$, $n(t)$ and $\delta(t).$

The transient oscillations of the
energy spectrum observed at the highest wavenumbers (see above Fig. \ref{Fig:Energy_Spectra_maxvort}(a) are eliminated by averaging the TG spectrum on shells of width
$\Delta k=2$ before performing the fit \citep{BRACHET:1983p4817}.

We present in Fig. \ref{Fig:Fit_comp}, examples of TG energy spectra fitted in such a way on the intervals $2<k<\min(k^*,k_{\max})$, where $k^*= \inf _{E(k)< 10^{-32}} (k)$ denotes the beginning of round off noise. It is apparent that the fits are globally of a good quality.

The time evolution of the fit parameters  $C$, $\delta$ and $n$ computed at different resolutions are displayed in Fig. \ref{Fig:Fit_Evolution}.
The measure of the fit parameters  is reliable as long as $\delta(t)$ remains larger than a few mesh sizes, a condition required for the smallest scales to be accurately resolved and spectral convergence ensured. Thus the dimensionless quantity $\delta k_{\max}$ is a measure of spectral convergence.

It is conventional \cite{BRACHET:1983p4817} to define a `reliability time' $T_{\mathrm{rel}}$ by the condition
\begin{equation}
\label{eq:trel}
\delta(T_{\mathrm{rel}}) k_{\max}=2 \, ,
\end{equation}
and to say that the numerical simulation is reliable for times $t\leq T_{\mathrm{rel}}$.
This reliability time can be extended only by increasing the spatial resolution available for the simulation, so the more computer power is available the larger is the reliability time.

The resolution-dependent reliability condition \eqref{eq:trel} is marked by the horizontal lines in Fig. \ref{Fig:Fit_Evolution}(c).
The exponential law
\begin{equation}
\label{eq:expdelta}
\delta(t)\sim 2.70 \, e^{-t/0.56}  ~,
\end{equation}
that was previously reported at resolution $2048^3$ in reference \cite{CichowlasBrachet2005}
is also indicated in Fig. \ref{Fig:Fit_Evolution}(c) by a dashed black line.
It is thus apparent that our lower-resolution results well reproduce the previous computations that were discussed above in Sec. \ref{Sec:Intro} (see text preceding references \cite{BRACHET:1983p4817,Brachet1992,CichowlasBrachet2005}).

In Table \ref{Tab:table_rel}, the reliability time \eqref{eq:trel} obtained from the fit parameter $\delta$ of Fig. \ref{Fig:Fit_Evolution} is compared with the reliability time stemming from the exponential behavior \eqref{eq:expdelta}.
\begin{table}[h]
\begin{tabular}{|c||c|c|}
	\hline
Resolution &  $T_{\mathrm{rel}}$ (exponential law)    &   $T_{\mathrm{rel}}$ (fit)  \\
	\hline
$512^3$    & 3.05 & 3.05 \\
$1024^3$   & 3.43 & 3.44 \\
$2048^3$   & 3.82 & 3.75 \\
$4096^3$   & 4.21 & 3.85 \\
	\hline
\end{tabular}
\caption{Reliability time \eqref{eq:trel} deduced from the exponential behavior \eqref{eq:expdelta} compared with the reliability time obtained from the fit parameter $\delta$ of Fig. \ref{Fig:Fit_Evolution}.}
\label{Tab:table_rel}
\end{table}
It is apparent by inspection of the Table that the reliability time of
our new $4096^3$ results is markedly smaller than that deduced from the exponential law \eqref{eq:expdelta}; the latter wrongly predicts that simulations at this resolution should be reliable until $t=4.21$. The departure from the exponential behavior is also visible on the inset in Fig. \ref{Fig:Fit_Evolution}(c).

In order to capture this change of behavior more quantitatively the logarithmic decay rate $-d\ln(\delta)/dt$, computed using finite differences in time, is displayed in Fig. \ref{Fig:Fit_Evolution}(d).
A clear change in trend is apparent around $t=3.7.$ where the logarithmic decay rate abruptly changes from a value near $2$ to a value near $8$.
Note that this change of behavior happens at a time that is reliable at resolution $4096^3$ (see insets in Fig. \ref{Fig:Fit_Evolution}(c) and \ref{Fig:Fit_Evolution}(d). Interestingly, this time is close to the reliability time of the $2048^3$ simulation. Therefore, the new behavior of accelerated decay for times $t>3.7$ can only be suggested by the $2048^3$ data and is here demonstrated for the first time by our $4096^3$-resolution data.
This acceleration of the decay rate of $\delta(t)$ is important because
if \eqref{eq:expdelta} could be safely extrapolated to later times then the Taylor-Green vortex would never develop a real singularity \cite{frischbook}.

Let us conclude this section by showing that the new behavior does not depend on the wavenumber interval chosen to perform the fits.

Indeed, by close inspection of the top curve in Fig. \ref{Fig:Fit_comp} one can see that a small amount of systematic errors are present at the lowest ($k<100$) wavenumbers for large times. Excluding the lowest wavenumbers from the fits results in less errors (data not shown).
In Table \ref{Tab:table_int}, the result of fits performed on the subinterval $103< k< k_{\max}$ are compared with those on the full interval $3< k< k_{\max}$ that was used up to now .
\begin{table}[h]
\begin{tabular}{|c||c|c|c|c|}
	\hline
Time &  $n$   &   $n$   &  $10^{3} \times\delta$     &   $10^{3} \times \delta$ \\
         &$3 - k_{\max}$&  $103 - k_{\max}$   &$3  -  k_{\max}$&  $103 -  k_{\max}$\\
	\hline
3.6  & 4.07  & 3.95  & 4.13  &  4.22 \\
3.65 & 4.09  & 4.05  & 3.73  &  3.75 \\
3.7  & 4.09  & 4.14  & 3.36  &  3.31 \\
3.75 & 4.09  & 4.19  & 2.85  &  2.76 \\
3.8  & 4.12  & 4.29  & 2.10  &  1.95 \\
3.85 & 4.13  & 4.34  & 1.41  &  1.22 \\
3.9  & 4.09  & 4.34  & 0.94  &  0.71 \\
	\hline
\end{tabular}
\caption{Time evolution of fit parameters $n$ and $\delta$ (see Eq. \eqref{eq_fitTG}) on full interval $3< k< k_{\max}$ (same as in Fig. \ref{Fig:Fit_Evolution}) compared with fits on subinterval $103< k< k_{\max}$.}
\label{Tab:table_int}
\end{table}
It can be checked on the table that the departure from the exponential law is not dependent on the interval chosen to perform the fit.
The values of $n$ are also in agreement with previously published data \cite{CichowlasBrachet2005}.

\subsection{BKM analysis of vorticity maximum} \label{subsec:fit_methods_omegas}
In this section we look for eventual singular behavior by focusing on the time-dependence of the
TG data for the vorticity supremum $||\bomega||_\infty(t)$ that is displayed above in Fig. \ref{Fig:Energy_Spectra_maxvort}(b).
The BKM theorem \cite{BKM84} states that blowup (if it takes place) requires the time-integral of the supremum of the vorticity to become infinite.
Our analysis method, first introduced in \cite{BustamanteKerr2008}, amounts to look at evidence of power-law behavior in the numerical time series for $||\bomega||_\infty(t)$ and see if the computed exponent is compatible with blowup of the time integral of $||\bomega||_\infty(t)$.
We now proceed to briefly recall the method.

Let $f(t)$ be the quantity to be studied. In order to test if it might blow up or go to zero in a finite time, we produce, locally in time, fits of power law behavior of the form
\begin{equation}
\label{eq:sing}
f(t) \approx c {{\left(T_* -t \right) }^{\gamma}}\,,
\end{equation}
and we study the `instantaneous' or running estimates for $\gamma$ and $T_*$ as a function of time.

The local fits are done as follows: we first produce the new function
\begin{equation}
\label{eq:localfit}
g(t)=\left(\frac{d \ln f(t)}{dt}\right)^{-1}=f(t)/f'(t).
\end{equation}
If $f(t)$ is of the form (\ref{eq:sing}) then our new function satisfies $g(t) \approx (T_*-t)/\gamma.$ Therefore, a linear fit of $g(t)$ will give $T_*$ and $\gamma$. More explicitly, we have the local expressions
\begin{equation}
\label{eq:gamma}
\gamma(t)=\left({1-\frac{f(t)\,f''(t)}{{f'(t)}^2}} \right)^{-1},
\end{equation}
and
\begin{equation}
\label{eq:Tc}
T_*(t)=t + \frac{f(t)\,f'(t)}{ f(t)\,f''(t)-{f'(t)}^2}.
\end{equation}
The latter local expressions can be used with any suitable fit method of the data, not necessarily linear fits.

In practice, as our time series are given on an equally spaced temporal grid, we proceed in the following straightforward manner. First we compute $\ln(f(t))$, then we use centered finite differences to estimate its derivative. Inverting this data furnishes estimates of $g(t)$ at the midpoints. Using again centered finite differences produces estimates of $1/\gamma$ on the original grid, thus allowing the determination of local estimates for both $T_*$ and $\gamma$. Note that this algorithm basically amounts to a local $3$-point nonlinear fit.

\begin{figure}[htbp]
\begin{center}
\includegraphics[height=6.5cm]{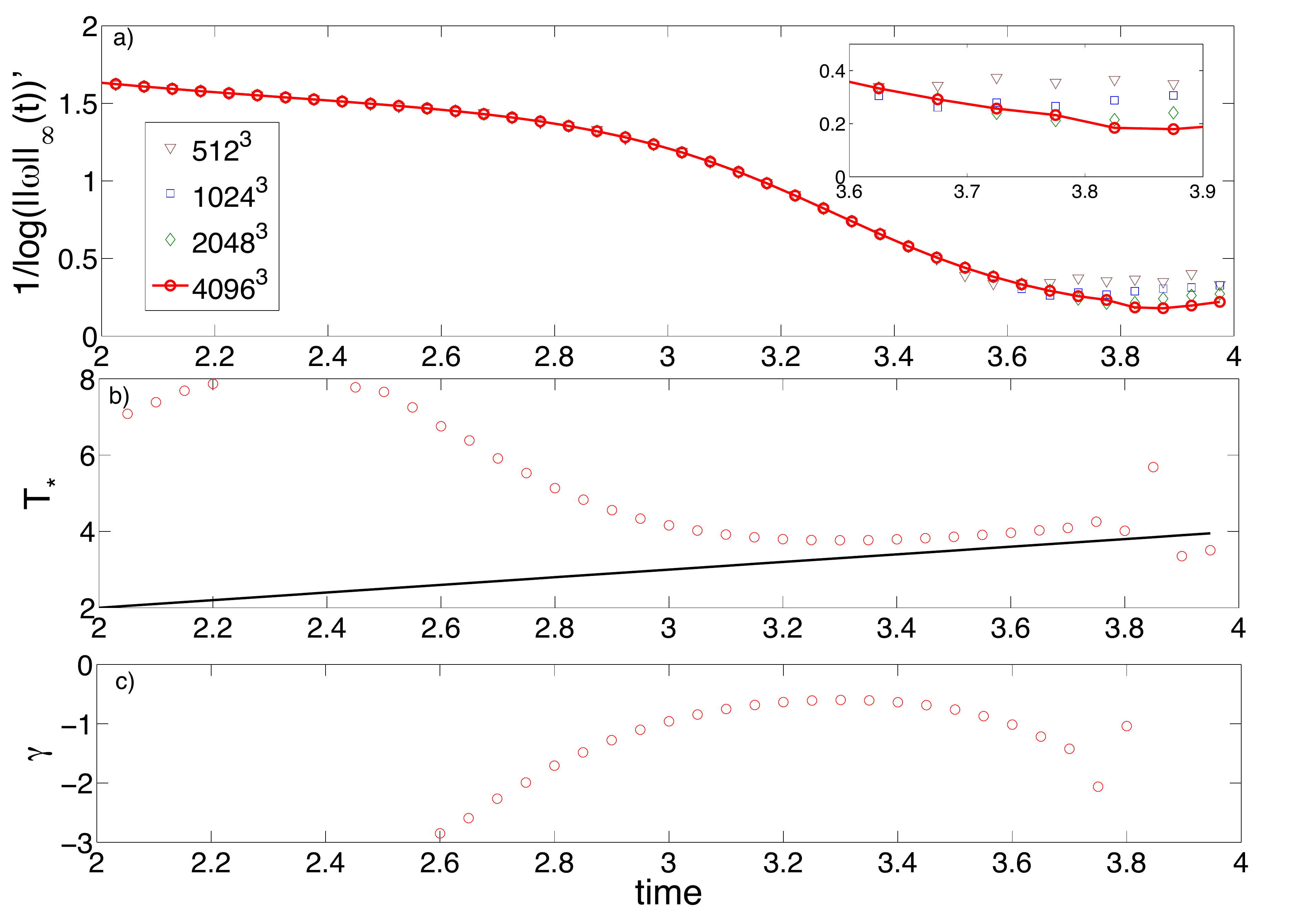}
\caption{(Color online) Time evolution of a) inverse logarithmic derivative \eqref{eq:localfit} at all resolutions (see legend); b) extrapolated $T_*$ \eqref{eq:Tc} (solid black line: $T_*=t$) and c) running value of $\gamma$ \eqref{eq:gamma}, both only at resolution $4096^3$ (red circles).}
\label{Fig:VortSup}
\end{center}
\end{figure}

The values of $g(t)$, $T_*(t)$ and $\gamma(t)$ obtained in this way from the TG data for the vorticity supremum $||\omega ||_\infty$ are displayed in Fig. \ref{Fig:VortSup}.
It is apparent that $g(t)$ presents an inflection point around $t=3.3$ corresponding to a maximum value of $\gamma$ that is above $-1$. Thus local in time power-law extrapolations around $t=3.3$ are inconsistent with the BKM theorem that requires $\gamma\leq -1$. However, when $t$ is larger than $3.6$, the value of $\gamma$ goes below $-1$ and thus becomes compatible with BKM.

On the other hand, there is no sign that the data values of  $\gamma$ and $T_*$ are settling down into constants, corresponding to a simple power-law behavior.

Recall (see Section \ref{Sec:NumEFits}) that the last reliable value of $||\omega ||_\infty$ at resolution $4096^3$ is at $t=3.85$.
Thus, due to our $3$-point extrapolation method, the last reliable data point is at $t=3.825$ in Fig. \ref{Fig:VortSup}(a) and  at $t=3.8$ in Figs. \ref{Fig:VortSup}(b) and \ref{Fig:VortSup}(c). The data corresponding to $\gamma$ and $T_*$ are also displayed in Table \ref{Tab:table_omegasup}.
\begin{table}[h]
\begin{tabular}{|c||c|c|}
	\hline
Time &  $\gamma$   &     $T_*$     \\
	\hline
3.7   & -1.42  & 4.09  \\
3.75  & -2.06   & 4.26  \\
3.8   & -1.04   & 4.02 \\
	\hline
\end{tabular}
\caption{Power-law fit parameters $\gamma$ and $T_*$ (see Eq. \eqref{eq:sing}) for the vorticity supremum $||\omega ||_\infty$ determined at resolution $4096^3$ (see Fig. \ref{Fig:VortSup}(b) and Fig. \ref{Fig:VortSup}(c)).}
\label{Tab:table_omegasup}
\end{table}

Thus our conclusion for this section is that although our late-time reliable data for $||\omega ||_\infty(t)$ shows  $\gamma(t)<-1$ and is therefore not inconsistent with BKM, clear power-law behavior of $||\omega ||_\infty(t)$ is not achieved.

\section{Bridging analyticity-strip method and BKM Theorem}
\label{Sec:As_Bkm}

\subsection{Motivation and simple estimates}\label{subsec:Heur}
The vorticity maximum $\left\| \bomega(\cdot,t) \right\|_{\infty}$ was found to decrease when the resolution is reduced at any given time $t>3.7$ (see the above discussion following Fig. \ref{Fig:Energy_Spectra_maxvort}(b)). This strongly suggests that, in this late-time regime, $\left\| \bomega(\cdot,t) \right\|_{\infty}$ has significant contributions coming from high-wavenumbers modes. In this context, the following short heuristic argument is provided as a motivation for the more rigorous mathematical results to follow.

Consider the well-known Sobolev inequality, which can be derived using the same hypotheses as in Lemma \ref{lem:main} below:
\begin{eqnarray}
\label{eq:ineq_old}
 \left\| \bomega(\cdot,t) \right\|_{\infty} &\leq& C_\epsilon \, \sqrt{2\,\Omega_{\epsilon +5/2}(t)}\,,\quad \forall \,t \in [0,T)\,.
\end{eqnarray}
This bound is valid for any $\epsilon>0,$ where
\begin{equation}
C_\epsilon \equiv \sqrt{\displaystyle \sum_{\mbf{k} \in \mathbb{Z}_{\mathrm{odd}}^3 \cup \mathbb{Z}_{\mathrm{even}}^3 \setminus \{\mbf{0}\}} \left|\mbf{k}\right|^{-3-2\,\epsilon}}\  ,\label{eq:Cepsilon}
\end{equation}
and
$\Omega_p$ is defined by
\begin{eqnarray}
 \label{eq:Sobolev}
\Omega_p(t) \equiv   \frac{1}{2}{\displaystyle\sum\limits_{\mbf{k}\in \mathbb{Z}_{\mathrm{odd}}^3 \cup \mathbb{Z}_{\mathrm{even}}^3 }} |\mbf{k}|^{2p}\,|\widehat{\mbf{u}}(\mbf{k},t)|^2 \,.
\end{eqnarray}
Notice that  $2 \Omega_p$ is the square of the Sobolev seminorm $\left | \mbf{u}(\cdot,t) \right |_{H^p}.$

Motivated by the numerical results of Section \ref{Sec:NumEFits}, let us assume, at a given time $t$, a behavior of the energy spectrum \eqref{eq:spectrum} of the type
\begin{equation}
\label{eq:heuristic energy}
E(k)\sim k^{-n}e^{-2 \delta k}.
\end{equation}
Notice that $n$ and $\delta$ are functions of time. When $n<6$ and $\delta$ tends to zero, this gives a UV-divergence:
\begin{equation*}
\Omega_{\epsilon +5/2} \sim \int_1^\infty k^{5+2 \epsilon-n}e^{-2 \delta k} dk \sim \delta^{-6+n-2 \epsilon}\,. 
\end{equation*}
Plugging this into the bound \eqref{eq:ineq_old}, and using the BKM theorem, we get $\int^{T_*} \delta(t)^{-3+\frac{n}{2}- \epsilon} dt =\infty,$ where $T_*$ is the hypothetical singularity time.

At this point, again motivated by our numerical results, we assume $n=\mathrm{const.} < 6$ and assume a power-law behavior for the analyticity-strip width of the form
\begin{equation*}
\delta(t) \propto (T_{*}-t)^{\Gamma}\,, 
\end{equation*}
where $\Gamma>0$ is a constant. Replacing this into the above integral we conclude that
\begin{equation*}
\int^{T_*} (T_*-t)^{(-3+\frac{n}{2}- \epsilon)\Gamma} dt = \infty\,,
\end{equation*}
i.e., a finite-time singularity can be attained only if the exponents satisfy $(-3+\frac{n}{2}- \epsilon)\Gamma \leq -1$ for any $\epsilon > 0\,.$ Taking the limit $\epsilon \to 0$ we deduce finally
\begin{equation*}
\Gamma \geq \frac{2}{6-n}\,. 
\end{equation*}
In words: \textbf{``if the analyticity-strip width $\delta(t)$ goes to zero as a power law, then the exponent must be greater than or equal to $\frac{2}{6-n}\,.$''}

The main difficulty to overcome in order to materialize the above heuristic arguments into a firm basis, is that the common Sobolev bound \eqref{eq:ineq_old} has a problem at $\epsilon = 0:$ the constant $C_\epsilon$ is equal to infinity there, so taking the limit as we did is not fully justified. We provide the solution to this problem by finding a new rigorous bound, sharper than the common Sobolev bound, and which gives the same optimal exponents without a divergent constant.

The second difficulty is that the assumed behavior for the energy spectrum \eqref{eq:heuristic energy}, commonly used in the analyticity-strip method, is a very strong condition and does not hold uniformly for $k \in \mathbb{N}.$ In fact, the evidence in analytically solvable models such as the 1D Burgers equation, is that the behavior \eqref{eq:heuristic energy} holds with some exponents $n$ and $\delta$ in the region $k \gg \delta^{-1},$ (large-$k$ asymptotic limit), and the behavior $E(k,t) \sim k^{-\tilde{n}}$ holds in the region $1 \leq k \ll \delta^{-1},$ with $\tilde{n} < n.$ We provide the solution to this lack of uniformity by introducing a ``working hypothesis'' which is a uniform-in-$k$ inequality for the energy spectrum, that still retains the spirit of the analyticity-strip method. The working hypothesis is verified for the case of 1D Burgers equation (see below the discussion at the end of Sec. \ref{Sec:Conclusion}).

\subsection{Mathematical preliminaries}

\subsubsection{BKM theorem}
We assume the usual hypotheses of the Beale-Kato-Majda (BKM) theorem. Let $T$ denote, from here on, a generic time so that the velocity field $\mbf{u} \in C([0,T);H^p)\cap
C^1([0,T);H^{p-1})\,,\,\,p\geq3,$  so in particular the quantities defined in \eqref{eq:Sobolev} are bounded for $p\geq3:$
\begin{eqnarray*}
\Omega_p(t) \leq c_p\,,\,\,\forall \,t\in[0,T)\,.
\end{eqnarray*}
The BKM theorem \cite{BKM84} states that the
assumed regularity of the velocity field can be extended up to and including the time $T$ \mbox{if and only if}
\mbox{$\tau(T) \equiv \int_{0}^T ||\bomega(\cdot,t)||_{\infty} ~dt < \infty.$} By `regular up to and including the time
$T$' we mean $\mbf{u} \in C([0,T];H^p)\cap C^1([0,T];H^{p-1}),\,p\geq3$.

\begin{defn}
\label{defn:T*}
We define the maximal time of regularity $T_* \in (0,\infty]$ as the earliest time for which $\mbf{u}$ ceases to be in  $C([0,T];H^p)\cap
C^1([0,T];H^{p-1})\,,\,\,p\geq3.$

\noindent If $T_* < \infty$ we speak of a finite-time singularity.
\end{defn}

With this definition, we conclude that the time integral appearing in the BKM theorem converges for all $T<T_*$ and diverges at $T=T_*$: $\int_{0}^{T_*} ||\bomega(\cdot,t)||_{\infty} ~dt = \infty\,.$

\subsubsection{Working hypothesis for energy spectrum}
An implicit assumption of the analyticity-strip method is the existence of the Fourier components of the solution of the 3D Euler equations. Taylor-Green (TG) symmetries imply that only modes with even-even-even and odd-odd-odd wavenumber components are present (see Section \ref{subsec:symm}). The appropriate definition of the energy spectrum is thus

\begin{defn}
\label{defn:spectrum}
The kinetic energy spectrum $E(k,t)$ is defined as the sum of squares of modulus of Fourier coefficients over spherical shells
\begin{equation}
\label{eq:spectrumTG}
E(k,t) = \frac{1}{2} {\displaystyle \sum_{\mbf{k} \in  \mathbb{Z}_{\mathrm{odd}}^3 \cup \mathbb{Z}_{\mathrm{even}}^3 \atop k-1/2  < |\mbf{k}| < k+1/2 }} |\widehat{\mbf{u}}({\bf k},t)|^2.
\end{equation}
\end{defn}

It is easy to check that the TG symmetries imply that $E(0,t)=E(1,t)=0 \quad \forall t \in [0,T_*).$ Numerical observations (see reference \cite{CichowlasBrachet2005} and Section \ref{Sec:Numerics_Classical} above) lead us to formulate the following working hypothesis that will be used to bound the energy spectra:\\

\begin{hypo}[Working hypothesis]
\label{hypo:working}
From here on, we will assume that
there exist a constant $M>0$ and positive functions $n_0(t), \delta_0(t),$ continuous on $[0,T_*),$ such that for all times $t \in [0,T_*)$ and all $k \in \mathbb{Z}, k \geq 2$ we have
\begin{equation}
\label{eq:fit_bound}
E(k,t) \leq M \,k^{-n_0(t)}\,e^{- 2\, k\,\delta_0(t)}\,.
\end{equation}
\end{hypo}

\noindent \textbf{Remarks.} (i) The working hypothesis is consistent with the hypotheses of the BKM theorem. \mbox{(ii) The} working hypothesis is an inequality defined globally in $k$ and \emph{is not a large-$k$ asymptotic expansion}. Furthermore, a large-$k$ asymptotic expansion is typically of the form $E(k,t) = C_1(t) k^{-n_1(t)}\,e^{- 2\, k\,\delta_1(t)}$ and has, in contrast to \eqref{eq:fit_bound}, a time-dependent constant $C_1(t)$. Nevertheless, asymptotic results can be used to establish the working hypothesis in special cases such as the $1D$-inviscid Burgers equation (see the discussion below, at the end of Sec. \ref{Sec:Conclusion}). \mbox{(iii) The} numerically-obtained fits of the analyticity-strip method $E(k,t) \approx C(t) k^{-n(t)}\,e^{-2 k \delta(t)}$ are similarly related to the working hypothesis. Notice that these fits are obtained over a finite range of values of wavenumber $k$, so they give only partial information. At early times, when the analyticity-strip width $\delta$ is big so that $\delta k \gg 1,$ one is in the ``large-$k$ asymptotic limit''. At late times, when $\delta$ becomes of the order of the highest resolved wavenumber $k_{\max},$ we have $\delta k \lessapprox 1$ and thus the fits represent the ``small-$k$ range''. The relations $n(t) \geq n_0(t)$ and $\delta(t) \geq \delta_0(t)$ are required for consistency with the working hypothesis. In practice, we will use the numerically obtained $n(t)$ and $\delta(t)$ to estimate $n_0(t)$ and $\delta_0(t)$.

\subsubsection{Classification of solutions in terms of regularity}

We see from Definition \ref{defn:T*} that a finite-time singularity is defined by the condition $T_* < \infty.$ Combining this with the working hypothesis, a finite-time singularity can occur only if $\lim_{t \to T_*} \delta_0(t) = 0.$ Amongst all possible continuous positive functions $\delta_0(t)$ that tend to zero as $t \to T_*$ we will consider, to simplify the analysis, only power-law type of functions.

\begin{defn}
\label{defn:power-law sing}
A solution of the 3D Euler equations satisfying the working hypothesis \eqref{eq:fit_bound}, is said to have a finite-time singularity of power-law type, with power $\Gamma > 0,$ iff the working hypothesis admits a function $\delta_0(t)$ that behaves, near $t = T_*,$ as
\begin{equation*}
 \delta_0(t)  \propto (T_* - t)^\Gamma.
\end{equation*}

\end{defn}

We saw in the heuristics Section \ref{subsec:Heur} that if the energy spectrum is of the form $E(k,t) \approx C(t) k^{-n(t)}\,e^{-2 k \delta(t)},$ then the exponent $n(t)$ must be less than 6 in order for a finite-time singularity to occur. This result will be fully formalized in Section \ref{subsec:main_results}, but first we need to define two types of solutions in terms of the behavior of the exponent $n_0(t)$  appearing in the working hypothesis.

\begin{defn}
\label{defn:strong_vs_mild}
A solution of the 3D Euler equations satisfying the working hypothesis \eqref{eq:fit_bound}, is said to be of strong regularity iff the working hypothesis admits an exponent $n_0(t)$ such that $\displaystyle \liminf_{t \to T_*} n_0(t) > 6.$ Otherwise, i.e. if all the exponents admitted by the working hypothesis satisfy $\displaystyle \liminf_{t \to T_*} n_0(t) \leq 6,$ the solution is said to be of mild regularity.
\end{defn}

The reason for the name ``strong'' is due to the following lemma (to be proved in Section \ref{subsec:main_results}):

\begin{lem}
\label{lem:strong}
Let a solution of the 3D Euler equations satisfying the working hypothesis \eqref{eq:fit_bound}, be of strong regularity. Then the solution has no finite-time singularity.
\end{lem}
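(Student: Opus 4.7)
The plan is to argue by contradiction using the Sobolev bound \eqref{eq:ineq_old} together with the working hypothesis \eqref{eq:fit_bound}, to show that in the strongly regular case the vorticity supremum stays uniformly bounded up to the alleged singularity time, which contradicts BKM.

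Suppose for contradiction that $T_* < \infty$. By the strong regularity assumption, $\liminf_{t \to T_*} n_0(t) > 6$, so I can pick $\epsilon > 0$ small enough that $\liminf_{t \to T_*} n_0(t) > 6 + 2\epsilon$, and a time $t_0 \in [0, T_*)$ such that $n_0(t) > 6 + 2\epsilon$ for all $t \in [t_0, T_*)$. This is the only place where the quantitative content of strong regularity enters.

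Next I would estimate $\Omega_{5/2+\epsilon}(t)$ uniformly on $[t_0, T_*)$. Using Definition \ref{defn:spectrum}, the contribution to $2\Omega_p(t)$ from each spherical shell of index $k$ is bounded by $(k+1/2)^{2p} \cdot 2 E(k,t)$, so that, after dropping the contributions from $k = 0, 1$ (which vanish by TG symmetry) and applying \eqref{eq:fit_bound}, one gets
\begin{equation*}
2\,\Omega_{5/2+\epsilon}(t) \;\leq\; 2M \sum_{k \geq 2} (k+\tfrac{1}{2})^{5+2\epsilon}\, k^{-n_0(t)}\, e^{-2 k\, \delta_0(t)}.
\end{equation*}
Since $\delta_0(t) \geq 0$ and $n_0(t) > 6+2\epsilon$ on $[t_0, T_*)$, each term is bounded by a constant multiple of $k^{-1-\eta}$ with $\eta = n_0(t) - (6+2\epsilon) > 0$. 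Choosing the crude uniform bound $k^{-n_0(t)} \leq k^{-(6+2\epsilon)}$ valid on that interval, the sum is dominated by the convergent series $\sum_{k \geq 2} (k+1/2)^{5+2\epsilon}\,k^{-(6+2\epsilon)}$, and one obtains a finite constant $K(\epsilon, M)$, independent of $t$, such that $\Omega_{5/2+\epsilon}(t) \leq K(\epsilon, M)$ for all $t \in [t_0, T_*)$.

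Plugging this into \eqref{eq:ineq_old} with the same $\epsilon > 0$ yields $\|\bomega(\cdot,t)\|_\infty \leq C_\epsilon \sqrt{2K(\epsilon,M)}$ for all $t \in [t_0, T_*)$, where $C_\epsilon$ is finite because $\epsilon > 0$. Consequently
\begin{equation*}
\int_0^{T_*} \|\bomega(\cdot,t)\|_\infty\,dt \;\leq\; \int_0^{t_0} \|\bomega(\cdot,t)\|_\infty\,dt \;+\; (T_* - t_0)\,C_\epsilon \sqrt{2K(\epsilon,M)} \;<\; \infty,
\end{equation*}
the first integral being finite by continuity on the compact subinterval $[0, t_0]$. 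This contradicts the BKM characterization of $T_*$ recorded in Definition \ref{defn:T*}, so no such finite $T_*$ exists, and the solution is global.

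The main obstacle I anticipate is of a bookkeeping nature rather than a conceptual one: making the choice of $\epsilon$ explicit enough to guarantee both (i) convergence of the sum bounding $\Omega_{5/2+\epsilon}$ uniformly in $t$, and (ii) finiteness of the prefactor $C_\epsilon$ from \eqref{eq:Cepsilon}. The strict inequality $\liminf n_0 > 6$ is what makes both demands simultaneously achievable; the borderline case $\liminf n_0 = 6$ would force $\epsilon \to 0$ and reintroduce the divergence of $C_\epsilon$, which is exactly the obstruction that motivates the introduction of the sharper bound announced in Section \ref{subsec:Heur} and needed for the mild-regularity case.
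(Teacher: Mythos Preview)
Your argument has the right shape but contains one concrete slip. In the step where you pass to the ``crude uniform bound $k^{-n_0(t)} \leq k^{-(6+2\epsilon)}$'', the resulting comparison series $\sum_{k\geq 2}(k+\tfrac12)^{5+2\epsilon}\,k^{-(6+2\epsilon)}$ has general term asymptotic to $k^{-1}$ and therefore \emph{diverges}; it is not the convergent series you claim. The repair is immediate and costs nothing conceptually: since $\liminf_{t\to T_*} n_0(t) > 6$, choose $\epsilon>0$ with $\liminf n_0 > 6+4\epsilon$, and then $t_0$ so that $n_0(t)\geq 6+4\epsilon$ on $[t_0,T_*)$. The crude replacement becomes $k^{-n_0(t)}\leq k^{-(6+4\epsilon)}$, the comparison series behaves like $\sum k^{-1-2\epsilon}<\infty$, and the rest of your proof goes through verbatim.

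With that fix, your route is genuinely different from the paper's. The paper does not touch the Sobolev bound \eqref{eq:ineq_old} here; instead it first proves Theorem~\ref{thm:main} via the sharp inequality \eqref{eq:ineq_1}, obtaining the necessary condition $\int^{T_*}\LiTyG{n_0(t)/2-2}{e^{-\delta_0(t)}}\,dt=\infty$, and then uses monotonicity of the polylogarithm (Lemma~\ref{lem:poly}(i)) together with its boundedness for index $>1$ (Lemma~\ref{lem:poly}(ii)) to conclude that the integrand stays bounded when $\liminf n_0>6$, forcing $T_*=\infty$. Your argument is more elementary and entirely self-contained for the strong case, and you correctly diagnose why it breaks at the borderline $\liminf n_0=6$ (the constant $C_\epsilon$ blows up). The paper's polylogarithm framework, by contrast, is designed to survive into the mild-regularity regime and feed directly into Corollaries~\ref{cor:finite-time-sing} and~\ref{cor:beta}.
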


This lemma's assertion is basically the same as the well-known fact that there cannot be a finite-time loss of analytic regularity without loss of $C^1$ regularity \cite{BardosBenachour77,KulkaviaVicol2011}.

This result can be used as a validation test for numerical simulations of 3D Euler fluids. If the supremum norm of the vorticity is to grow in time without bound, then the exponent $n_0(t)$ must be well below the critical value $6.$ Fortunately, all reliable numerical simulations that we know of pass this elementary test.

\subsection{Main results linking Beale-Kato-Majda theorem and analyticity-strip method}
\label{subsec:main_results}

\subsubsection{Sharp bound for vorticity}

\begin{lem}
\label{lem:main}
Let $\mbf{u}(\mbf{x},t)$ be a velocity field satisfying the Taylor-Green symmetries and with energy spectrum defined by equation \eref{eq:spectrumTG}. Let $\bomega = \nabla \times \mbf{u}$ be its vorticity, defined on the periodicity domain $D=[0,2\,\pi]^3.$ Then the following inequality is verified for all times $t \in [0,T)$:
\begin{eqnarray}
\label{eq:ineq_1}
 \left\| \bomega(\cdot,t) \right\|_{\infty} &\leq& {\displaystyle \sum_{k=2}^\infty \,\sqrt{2\,k(k+1)\, E(k,t)\,S_k}} \,,
\end{eqnarray}
where $S_k \equiv \#\{\mbf{k} \in  \mathbb{Z}_{\mathrm{odd}}^3 \cup \mathbb{Z}_{\mathrm{even}}^3 : k-1/2 < |\mbf{k}| < k+1/2\}$ is the combined number of lattice points (of the form odd-odd-odd or even-even-even) in a spherical shell of width 1 and radius $k \in \mathbb{Z}_+$.
\end{lem}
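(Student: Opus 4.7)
The plan is to proceed by direct Fourier analysis combined with a Cauchy--Schwarz argument in each spherical shell, exploiting (i) incompressibility, (ii) the integer nature of $|\mbf{k}|^2$, and (iii) the Taylor--Green symmetries.

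First I would write the vorticity in Fourier series: since $\bomega = \nabla \times \mbf{u}$, we have $\widehat{\bomega}(\mbf{k},t) = i\,\mbf{k}\times\widehat{\mbf{u}}(\mbf{k},t)$, and by the triangle inequality applied pointwise in $\mbf{x}$,
\begin{equation*}
\|\bomega(\cdot,t)\|_{\infty} \;\leq\; \sum_{\mbf{k}\in\mathbb{Z}^3} \bigl|\mbf{k}\times \widehat{\mbf{u}}(\mbf{k},t)\bigr|.
\end{equation*}
Incompressibility gives $\mbf{k}\cdot\widehat{\mbf{u}}(\mbf{k},t)=0$, hence $|\mbf{k}\times\widehat{\mbf{u}}(\mbf{k},t)| = |\mbf{k}|\,|\widehat{\mbf{u}}(\mbf{k},t)|$. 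The Taylor--Green symmetries (Section \ref{subsec:symm}) restrict the non-zero Fourier modes to $\mbf{k}\in\mathbb{Z}_{\mathrm{odd}}^3\cup\mathbb{Z}_{\mathrm{even}}^3$, and among these the smallest admissible $|\mbf{k}|^2$ is $3$ (the eight all-odd vectors $(\pm1,\pm1,\pm1)$), so the $\mbf{k}=\mbf{0}$ term and all contributions with $|\mbf{k}|<3/2$ vanish. Regrouping the remaining sum over spherical shells of unit width indexed by $k\geq 2$ yields
\begin{equation*}
\|\bomega(\cdot,t)\|_{\infty} \;\leq\; \sum_{k=2}^{\infty} \sum_{\substack{\mbf{k}\in\mathbb{Z}_{\mathrm{odd}}^3\cup\mathbb{Z}_{\mathrm{even}}^3 \\ k-1/2<|\mbf{k}|<k+1/2}} |\mbf{k}|\,|\widehat{\mbf{u}}(\mbf{k},t)|.
\end{equation*}

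Next I would sharpen the factor $|\mbf{k}|$ in the shell. Naively one would write $|\mbf{k}|<k+1/2$, producing an $(k+1/2)^2$ prefactor. The sharper bound $|\mbf{k}|^2 \leq k(k+1)$ follows from the observation that $|\mbf{k}|^2$ is a non-negative integer: the shell condition forces $|\mbf{k}|^2 \in (k^2-k+1/4,\,k^2+k+1/4)$, and the largest integer in this open interval is $k^2+k=k(k+1)$. This is the step I expect to be the main conceptual point of the lemma (though technically easy), since it is what produces the factor $k(k+1)$ rather than $(k+1/2)^2$ in the final bound and it is the reason the bound is described as ``sharp''.

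Finally I would apply the Cauchy--Schwarz inequality within each shell:
\begin{equation*}
\sum_{\mbf{k}\in\mathrm{shell}(k)} |\widehat{\mbf{u}}(\mbf{k},t)| \;\leq\; \sqrt{S_k}\;\sqrt{\sum_{\mbf{k}\in\mathrm{shell}(k)} |\widehat{\mbf{u}}(\mbf{k},t)|^2} \;=\; \sqrt{S_k\cdot 2E(k,t)},
\end{equation*}
using Definition \ref{defn:spectrum} in the last equality. Combining this with the bound $|\mbf{k}|\leq\sqrt{k(k+1)}$ for $\mbf{k}$ in the $k$-th shell gives
\begin{equation*}
\|\bomega(\cdot,t)\|_{\infty} \;\leq\; \sum_{k=2}^{\infty} \sqrt{k(k+1)}\cdot\sqrt{S_k\cdot 2E(k,t)} \;=\; \sum_{k=2}^{\infty}\sqrt{2\,k(k+1)\,E(k,t)\,S_k},
\end{equation*}
which is exactly \eqref{eq:ineq_1}. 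The only subtle ingredients are the integrality argument for $|\mbf{k}|^2$ and the TG-induced vanishing of low shells; everything else is standard Fourier/Cauchy--Schwarz manipulation, and absolute convergence of the initial Fourier sum is guaranteed by the assumed $H^p$ regularity with $p\geq 3$ (which makes even $\sum |\mbf{k}|^2|\widehat{\mbf{u}}(\mbf{k})|$ finite via Cauchy--Schwarz against $\sum|\mbf{k}|^{-2}$).
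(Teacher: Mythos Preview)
Your proof is correct and follows essentially the same route as the paper's: triangle inequality on the Fourier series, the identity $|\widehat{\bomega}(\mbf{k})|=|\mbf{k}|\,|\widehat{\mbf{u}}(\mbf{k})|$, the shellwise bound $|\mbf{k}|\leq\sqrt{k(k+1)}$, and Cauchy--Schwarz within each shell. You supply a bit more detail than the paper does---explicitly invoking incompressibility for the cross-product identity and spelling out the integrality argument behind $|\mbf{k}|^2\leq k(k+1)$---but the structure and all key steps are the same.
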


\begin{proof}[Proof] The vorticity field is defined in terms of its Fourier components by $\bomega(\mbf{x},t) = {\displaystyle \sum_{\mbf{k} \in \mathbb{Z}_{\mathrm{odd}}^3 \cup \mathbb{Z}_{\mathrm{even}}^3} e^{i \mbf{k}\cdot\mbf{x}} \widehat{\bomega}(\mbf{k},t)}.$ Therefore
\begin{equation}
 \label{eq:bound_vort_1}
\left|\bomega(\mbf{x},t)\right| \leq {\displaystyle \sum_{\mbf{k} \in \mathbb{Z}_{\mathrm{odd}}^3 \cup \mathbb{Z}_{\mathrm{even}}^3} \left|\widehat{\bomega}(\mbf{k},t)\right|},
\end{equation}
for all $\mbf{x} \in D.$ The LHS of this equation can be replaced by the supremum norm. Also, we use the identity $ \left|\widehat{\bomega}(\mbf{k},t)\right| = |\mbf{k}|  \left|\widehat{\mbf{u}}(\mbf{k},t)\right|$ on the RHS and obtain
\begin{equation*}
 \left\| \bomega(\cdot,t) \right\|_{\infty} \leq  {\displaystyle \sum_{\mbf{k} \in \mathbb{Z}_{\mathrm{odd}}^3 \cup \mathbb{Z}_{\mathrm{even}}^3} |\mbf{k}|  \left|\widehat{\mbf{u}}(\mbf{k},t)\right|}. 
\end{equation*}
Assuming that $\mbf{u}$ is regular so the above sum over the lattice converges, we can rewrite the sum over spherical shells of width 1 and radius $k \in \mathbb{Z}_+.$ We get
\begin{equation*}
 \left\| \bomega(\cdot,t) \right\|_{\infty} \leq  {\displaystyle \sum_{k=2}^\infty \left(\sum_{\mbf{k} \in \mathbb{Z}_{\mathrm{odd}}^3 \cup \mathbb{Z}_{\mathrm{even}}^3  \atop k-1/2  < |\mbf{k}| < k+1/2 } |\mbf{k}|  \left|\widehat{\mbf{u}}(\mbf{k},t)\right|\right)}. 
\end{equation*}
We proceed to bound the terms in brackets, for a given $k \in \mathbb{Z}_+.$ First, notice that the highest possible value of $|\mbf{k}|$ is equal to $\sqrt{k(k+1)}.$ We obtain the preliminary result
$ \left\| \bomega(\cdot,t) \right\|_{\infty} \leq  {\displaystyle \sum_{k=2}^\infty \sqrt{k(k+1)} \left(\sum_{\mbf{k} \in \mathbb{Z}_{\mathrm{odd}}^3 \cup \mathbb{Z}_{\mathrm{even}}^3  \atop k-1/2  < |\mbf{k}| < k+1/2 } \left|\widehat{\mbf{u}}(\mbf{k},t)\right|\right)}.$
Second, the remaining sum in brackets is related to the energy spectrum $E(k,t),$ equation (\ref{eq:spectrum}), by virtue of the Cauchy-Schwartz inequality. We have
\begin{equation}
\label{eq:bound_norm}
 \displaystyle \sum_{\mbf{k} \in \mathbb{Z}_{\mathrm{odd}}^3 \cup \mathbb{Z}_{\mathrm{even}}^3  \atop k-1/2  < |\mbf{k}| < k+1/2 } \left|\widehat{\mbf{u}}(\mbf{k},t)\right| \leq \sqrt{2\,E(k,t)} \sqrt{\sum_{\mbf{k} \in \mathbb{Z}_{\mathrm{odd}}^3 \cup \mathbb{Z}_{\mathrm{even}}^3  \atop k-1/2  < |\mbf{k}| < k+1/2 } 1} ,
\end{equation}
which establishes the Lemma. \end{proof}

\noindent \textbf{Remarks.}  The proof is independent of any evolution equation that $\mbf{u}$ might satisfy. Only two inequalities have been used to get the bound (\ref{eq:ineq_1}), and these inequalities are quite sharp:

First, the bound (\ref{eq:bound_vort_1}) is saturated when all phases are equal in the Fourier expansion for the vorticity field at the position of vorticity maximum. This saturation indeed takes place in one-dimensional systems that blow up in a finite time, such as the inviscid Burgers equation (work in progress).

Second, the bound (\ref{eq:bound_norm}) is saturated when all the terms are equal in the sum over the spherical shell of fixed radius $k$. Physically, such saturation should be observed in a fully isotropic scenario, i.e., when the terms $\left|\widehat{\mbf{u}}(\mbf{k},t)\right|^2$ depend more on the wavevector's modulus $|\mbf{k}|$ than on its direction $\mbf{k}/|\mbf{k}|$.

In contrast, the Sobolev bound \eqref{eq:ineq_old} would be saturated only for unphysical scenarios where the energy spectrum $E(k,t)$ has a compact support in $k$-space and is independent of the wavenumber $k$ on that support. Thus the Sobolev bound \eqref{eq:ineq_old} will be less sharp than the new bound \eqref{eq:ineq_1} for any realistic energy spectrum that decays as $k \to \infty.$\\

\noindent \textbf{Practical form.} We provide a more practical form of the sharp bound (\ref{eq:ineq_1}), by noticing that $S_k \approx \pi k^2$ as $k \to \infty$. Under the hypotheses of Lemma \ref{lem:main}, we readily obtain the estimate
\begin{eqnarray}
\label{eq:ineq_2_TYG}
 \left\| \bomega(\cdot,t) \right\|_{\infty} \leq c\,{\displaystyle \sum_{k=2}^\infty k^2\sqrt{E(k,t)}} \,,
\end{eqnarray}
where $c=2\sqrt{11/3}$. This constant was computed by direct inspection of the maximum deviation from the asymptotic formula $S_k \approx \pi k^2$. Although this estimate seems not as sharp as the original one, it will be enough for the practical situation where the analyticity-strip width $\delta(t)$ tends to zero and the main contribution comes from the `ultraviolet region' $k\gg 1.$

\subsubsection{Implications of BKM Theorem: General result}

Let us replace the working hypothesis for the energy spectrum (\ref{eq:fit_bound}) into the bound (\ref{eq:ineq_2_TYG}). The sum over   $k\geq2$ can be written in terms of the so-called polylogarithm function. We obtain the bound
\begin{equation}
\label{eq:ineq_inter}
  \left\| \bomega(\cdot,t) \right\|_{\infty} \leq c\,\sqrt{M}\, \LiTyG{\frac{n_0(t)}{2}-2}{{\mathrm{e}}^{-\delta_0(t)}}\,,
\end{equation}
where $\LiTyG{s}{z}$ is defined by
\begin{equation*}
\LiTyG{s}{z} \equiv \sum_{k=2}^\infty k^{-s} {z^k} = \Li{s}{z} - z\,,
\end{equation*}
 and $\Li{s}{z}$ is the Jonqui\`ere's function (or polylogarithm):
$\Li{s}{z} \equiv \sum_{k=1}^\infty k^{-s} {z^k}\,$.

Combining the bound \eref{eq:ineq_inter} with the BKM theorem we obtain the following

\begin{thm}
 \label{thm:main}
Let a solution of the 3D Euler equations satisfy the Taylor-Green symmetries and the working hypothesis \eref{eq:fit_bound}. Then its maximal regularity time $T_*$ must satisfy
\begin{equation}
 \label{eq:impl_BKM}
\int_0^{T_*} \LiTyG{\frac{n_0(t)}{2}-2} {{\mathrm{e}}^{-\delta_0(t)}}\,dt = \infty.
\end{equation}
\end{thm}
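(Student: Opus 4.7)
The plan is to combine the two building blocks already established, namely the sharp vorticity bound of Lemma \ref{lem:main} (in its practical form \eqref{eq:ineq_2_TYG}) and the working hypothesis \eqref{eq:fit_bound}, to obtain a pointwise-in-time upper bound for $\left\|\bomega(\cdot,t)\right\|_{\infty}$ expressed through the modified polylogarithm $\widetilde{\mathrm{Li}}$, and then to integrate in time and invoke the BKM theorem. I expect the argument to be essentially a chain of inequalities with very little room for subtlety; the only genuine care needed is in interchanging summation and integration and in treating the two cases $T_*<\infty$ and $T_*=\infty$ uniformly.

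First, I would fix an arbitrary $t\in[0,T_*)$. Substituting the working hypothesis \eqref{eq:fit_bound} into the practical form \eqref{eq:ineq_2_TYG} term-by-term gives
\begin{equation*}
\left\|\bomega(\cdot,t)\right\|_{\infty} \leq c\sqrt{M}\sum_{k=2}^{\infty} k^{2-n_0(t)/2}\,\mathrm{e}^{-k\,\delta_0(t)}.
\end{equation*}
Since $\delta_0(t)>0$ on $[0,T_*)$, the series converges absolutely for every value of $n_0(t)$; recognizing it as $\widetilde{\mathrm{Li}}(n_0(t)/2-2,\mathrm{e}^{-\delta_0(t)})$ immediately yields the pointwise estimate \eqref{eq:ineq_inter}. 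This step uses only the definition of $\widetilde{\mathrm{Li}}$ and Lemma \ref{lem:main}; no dynamical information beyond the working hypothesis is needed.

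Next, I would integrate \eqref{eq:ineq_inter} in time. For any $T\in(0,T_*)$, Fubini/Tonelli (justified by nonnegativity of all summands) gives
\begin{equation*}
\int_0^{T} \left\|\bomega(\cdot,t)\right\|_{\infty}\,dt \leq c\sqrt{M} \int_0^{T} \widetilde{\mathrm{Li}}\!\left(\frac{n_0(t)}{2}-2,\,\mathrm{e}^{-\delta_0(t)}\right)\,dt.
\end{equation*}
Taking the monotone limit $T\nearrow T_*$ on both sides preserves the inequality. By Definition \ref{defn:T*} and the BKM theorem (see the remark immediately following that definition), the left-hand side diverges: $\int_0^{T_*}\left\|\bomega(\cdot,t)\right\|_{\infty}\,dt=\infty$. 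Hence the right-hand side must diverge as well, which is precisely \eqref{eq:impl_BKM}.

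The main obstacle, as noted, is essentially bookkeeping rather than a conceptual hurdle: verifying that $\widetilde{\mathrm{Li}}$ is finite for each $t<T_*$ (this reduces to $\delta_0(t)>0$, guaranteed by the continuity of $\delta_0$ on $[0,T_*)$ combined with the assumption that it is positive), and ensuring that the monotone passage to the limit $T\nearrow T_*$ is valid in both the finite- and infinite-$T_*$ cases. No estimate from the Euler dynamics beyond the BKM theorem itself enters the argument.
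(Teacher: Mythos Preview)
Your proposal is correct and follows essentially the same route as the paper: substitute the working hypothesis \eqref{eq:fit_bound} into the practical bound \eqref{eq:ineq_2_TYG} to obtain the pointwise estimate \eqref{eq:ineq_inter}, then integrate in time and apply BKM. The paper's own proof is a single sentence (``a direct application of the BKM theorem to inequality \eqref{eq:ineq_inter}''); your version simply fills in the bookkeeping details---convergence of the series for $\delta_0(t)>0$, the monotone limit $T\nearrow T_*$, and the citation of the divergence $\int_0^{T_*}\|\bomega\|_\infty\,dt=\infty$ from the remark after Definition~\ref{defn:T*}.
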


\begin{proof}[Proof]
 The proof is a direct application of the BKM theorem to inequality \eref{eq:ineq_inter}. \end{proof}

At this point it is necessary to state without proof some properties of the polylogarithm:

\begin{lem}
\label{lem:poly}
The polylogarithm function $\Li{p}{z}$ satisfies the following properties:\\

\noindent (i) Let $0 < z < 1$ and let $p, q$ be two non-negative numbers. Then we have $\Li{p}{z} \leq \Li{q}{z} \iff p \geq q\,.$\\

\noindent (ii) Let $|\mu|<2\pi$ and let $r\in \mathbb{R}\setminus \mathbb{Z}_+.$  Then
\begin{equation*}
\Li{r}{e^{\mu}} \approx \Gamma(1-r)\,(-\mu)^{r-1} + {\displaystyle \sum \limits_{k=0}^{\infty}} \frac{\zeta(r-k)}{k!} \mu^k\,, 
\end{equation*}
where $\zeta$ is the Riemann zeta function.\\

\noindent (iii) Let $|\mu|<2\pi$ and let $s \in \mathbb{Z}_+.$ Then
\begin{equation*}
\Li{s}{e^{\mu}} \approx \frac{\mu^{s-1}}{(s-1)!}\left[H_{s-1} - \ln (-\mu)\right] + {\displaystyle \sum \limits_{k=0 \atop k\neq s-1}^{\infty}} \frac{\zeta(s-k)}{k!} \mu^k\,, 
\end{equation*}
where $H_p = {\displaystyle \sum \limits_{h=1}^{p}} \frac{1}{h}$ is the $p$-th harmonic number, with $H_0 = 0.$\\
\end{lem}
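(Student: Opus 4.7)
The plan for part (i) is an elementary term-by-term comparison. For fixed $z\in(0,1)$ and non-negative $p,q$ with $p\geq q$, the estimate $k^{-p}\leq k^{-q}$ holds for every $k\geq 1$, so summing against $z^k>0$ gives $\Li{p}{z}\leq \Li{q}{z}$. Conversely, if $p<q$ then $k^{-p}>k^{-q}$ strictly for every $k\geq 2$, and since $z^k>0$ the same termwise argument reverses the inequality strictly; by contraposition this settles the remaining direction of the equivalence.

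For parts (ii) and (iii) the plan is the Mellin--Barnes technique. I would first treat $\mu<0$, writing $\mu=-|\mu|$, and combine the classical Mellin representation $e^{-x}=\frac{1}{2\pi i}\int_{c-i\infty}^{c+i\infty}\Gamma(\sigma)\,x^{-\sigma}\,d\sigma$, valid for $x>0$, with $x=k|\mu|$, choosing $c>\max(1-r,0)$. Interchanging sum and integral (absolute convergence holds for that choice of $c$) yields
\begin{equation*}
\Li{r}{e^\mu}=\frac{1}{2\pi i}\int_{c-i\infty}^{c+i\infty}\Gamma(\sigma)\,\zeta(r+\sigma)\,|\mu|^{-\sigma}\,d\sigma.
\end{equation*}
The extension to complex $\mu$ with $|\mu|<2\pi$ then follows by analytic continuation in $\mu$, with $|\mu|^{-\sigma}$ replaced by the principal branch of $(-\mu)^{-\sigma}$.

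Next I would shift the contour from $\mathrm{Re}(\sigma)=c$ to $\mathrm{Re}(\sigma)=-N-\tfrac{1}{2}$ for arbitrarily large $N\in\mathbb{N}$, controlling the shifted integral via the exponential decay of $\Gamma$ on vertical lines against the polynomial growth of $\zeta$. The expansion then arises from the residues at the poles crossed: simple poles of $\Gamma(\sigma)$ at $\sigma=-k$ for $k=0,1,\ldots,N$, and the simple pole of $\zeta(r+\sigma)$ at $\sigma=1-r$. For non-integer $r$ (case ii) these poles are distinct: the residue at $\sigma=-k$ contributes $\frac{(-1)^k}{k!}|\mu|^k\zeta(r-k)=\frac{\mu^k}{k!}\zeta(r-k)$, and the residue at $\sigma=1-r$ contributes $\Gamma(1-r)(-\mu)^{r-1}$, reproducing formula (ii).

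The main obstacle is case (iii), in which $r=s\in\mathbb{Z}_+$ forces the $\zeta$-pole at $\sigma=1-s$ to collide with the $\Gamma$-pole at $\sigma=-(s-1)$, producing a double pole. To extract its residue I would Laurent-expand around $\sigma=-(s-1)$, writing $\tau=\sigma+s-1$ and using $\Gamma(\sigma)=\frac{(-1)^{s-1}}{(s-1)!}\bigl[\frac{1}{\tau}+\psi(s)+O(\tau)\bigr]$ with $\psi(s)=-\gamma+H_{s-1}$, $\zeta(s+\sigma)=\frac{1}{\tau}+\gamma+O(\tau)$, and $|\mu|^{-\sigma}=|\mu|^{s-1}[1-\tau\ln|\mu|+O(\tau^2)]$. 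Multiplying these expansions and isolating the coefficient of $\tau^{-1}$, the Euler--Mascheroni constants cancel and one is left with $\frac{\mu^{s-1}}{(s-1)!}[H_{s-1}-\ln(-\mu)]$ after using $(-1)^{s-1}|\mu|^{s-1}=\mu^{s-1}$. The remaining simple poles at $\sigma=-k$ with $k\neq s-1$ contribute $\sum_{k\neq s-1}\frac{\mu^k}{k!}\zeta(s-k)$, completing (iii). The delicate bookkeeping in this double-pole residue, in particular the exact cancellation of the $\gamma$ contributions, is the step that demands the most care.
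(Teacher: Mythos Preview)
Your argument is sound, but note that the paper does not actually prove this lemma: it is introduced with the sentence ``it is necessary to state without proof some properties of the polylogarithm,'' and the three items are simply quoted as classical facts (the expansions in (ii) and (iii) are the standard Lindel\"of--Wirtinger series for $\Li{s}{e^\mu}$ about $\mu=0$, found for instance in Erd\'elyi et al.\ or in the usual references on polylogarithms). So there is no proof in the paper to compare against.

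That said, what you have written is a correct and efficient way to supply the missing argument. The termwise comparison for (i) is exactly right, including the observation that strictness enters only from $k\geq 2$ since the $k=1$ terms coincide. For (ii) and (iii) the Mellin--Barnes contour-shift is the canonical derivation, and your handling of the double pole at $\sigma=-(s-1)$ in case (iii) is correct: the Laurent expansions you quote for $\Gamma$, $\zeta$, and $|\mu|^{-\sigma}$ are the right ones, and the cancellation of the Euler--Mascheroni constant between $\psi(s)=-\gamma+H_{s-1}$ and the constant term $\gamma$ in the $\zeta$-expansion is precisely what produces the clean $H_{s-1}-\ln(-\mu)$ combination. One small point worth making explicit is that the condition $|\mu|<2\pi$ is what guarantees convergence of the residual power series $\sum_k \zeta(r-k)\mu^k/k!$, via the functional equation and Stirling's bound for $\zeta$ at negative arguments; this is implicit in your contour-shift remainder estimate but could be stated.
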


We are now ready to prove\\

\noindent \textbf{Lemma \ref{lem:strong}.} \emph{Let a solution of the 3D Euler equations satisfying the working hypothesis \eqref{eq:fit_bound}, be of strong regularity. Then the solution has no finite-time singularity.}

\begin{proof}[Proof] By definition, solutions of strong regularity satisfy the working hypothesis with $\liminf_{t \to T_*} n_0(t) > 6.$ Therefore, using Lemma \ref{lem:poly} (i) on equation \eref{eq:impl_BKM}, we obtain $\int^{T_*} \LiTyG{1+\epsilon} {{\mathrm{e}}^{-\delta_0(t)}}\,dt = \infty\,,$ for some $\epsilon \in (0,1).$ Now, using Lemma \ref{lem:poly} (ii) with $r>1,$ we obtain that the integrand is continuous in time. Therefore $T_*=\infty.$ \end{proof}

\subsubsection{Implications of BKM Theorem: Singularity scenarios}

Theorem \ref{thm:main} represents our `bridge' from analyticity-strip method to BKM Theorem: a singularity of the solution at time $T_*$ can be attained only if the parameters $n_0(t)$ and $\delta_0(t)$ satisfy equation \eref{eq:impl_BKM}.

Recall that for a singularity to occur, the function $\delta_0(t)$ must tend to zero as $t \to T_*.$ The polylogarithm $\LiTyG{\frac{n_0(t)}{2}-2} {{\mathrm{e}}^{-\delta_0(t)}}$ has a branch point at $n_0(t) = 6, \delta_0(t) = 0$ (see Lemma \ref{lem:poly} (iii)) so the asymptotic behavior of the integrand (\ref{eq:impl_BKM}) as $\delta_0(t) \to 0$ depends sensitively on the behavior of the function $n_0(t)$ near the `critical' value $6.$ To avoid this branch point, we introduced solutions with strong and mild regularity in Definition \ref{defn:strong_vs_mild}.

The two following main results exploit the consequences of Theorem \ref{thm:main} in singularity scenarios. They provide us with a criterion on how fast must $\delta_0(t)$ decay to zero in order to sustain a singularity.

\begin{cor}
\label{cor:finite-time-sing}
Let a solution of the 3D Euler equations satisfy the Taylor-Green symmetries and the working hypothesis \eref{eq:fit_bound}. Let the solution be of mild regularity, i.e.,  $\liminf_{t \to T_*} n_0(t) \leq 6$, where $T_*$ is the maximal regularity time. Let $\lim_{t\to T_*}\delta_0(t) = 0.$ Then, $T_*$ satisfies
\begin{equation*}
\int^{T_*} \left(\frac{1}{\delta_0(t)}\right)^{\frac{6-n_-}{2}}\,dt = \infty,
\end{equation*}
for all $n_-$ in $(-\infty, {\displaystyle \liminf_{t \to T_*} n_0(t)}] \cap (-\infty,6).$
\end{cor}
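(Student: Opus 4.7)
The plan is to combine the divergent integral supplied by Theorem~\ref{thm:main} with the monotonicity and the small-$\delta$ leading asymptotic of the tilde-polylogarithm provided by Lemma~\ref{lem:poly}, in order to upper-bound the BKM integrand by a constant multiple of $\delta_0(t)^{-(6-n_-)/2}$ for $t$ close to $T_*$; the desired divergence then follows immediately.

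First I would invoke the definition of $\liminf$: since $n_- \leq \liminf_{t\to T_*} n_0(t)$, for every $\eta>0$ there is a time $T_\eta < T_*$ such that $n_0(t) \geq n_- - \eta$ on $(T_\eta, T_*)$. The monotonicity in the first argument supplied by Lemma~\ref{lem:poly}(i) then yields, on $(T_\eta, T_*)$,
\begin{equation*}
\LiTyG{\tfrac{n_0(t)}{2}-2}{e^{-\delta_0(t)}} \;\leq\; \LiTyG{\tfrac{n_--\eta}{2}-2}{e^{-\delta_0(t)}}.
\end{equation*}
Because $n_-<6$, the index $r=(n_--\eta)/2-2$ is strictly less than $1$ and, for generic small $\eta$, not a positive integer, so Lemma~\ref{lem:poly}(ii) with $\mu=-\delta_0$ gives the leading small-$\delta$ behavior $\widetilde{\mathrm{Li}}(r,e^{-\delta_0})\sim \Gamma(1-r)\,\delta_0^{r-1}=\Gamma(3-n_-/2+\eta/2)\,\delta_0^{-(6-n_-)/2-\eta/2}$ as $\delta_0\to 0^+$. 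Since $\delta_0(t)\to 0$, the singular term dominates the $O(1)$ remainder and one obtains
\begin{equation*}
\LiTyG{\tfrac{n_0(t)}{2}-2}{e^{-\delta_0(t)}} \;\leq\; C_\eta\, \delta_0(t)^{-(6-n_-)/2-\eta/2}
\end{equation*}
for all $t$ sufficiently close to $T_*$. Combining with Theorem~\ref{thm:main}, and noting that the integrand is continuous and hence bounded on any compact subinterval of $[0,T_*)$, delivers $\int^{T_*}\delta_0(t)^{-(6-n_-)/2-\eta/2}\,dt=\infty$ for every $\eta>0$.

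In the strict case $n_-<\liminf_{t\to T_*} n_0(t)$, one may take $\eta=0$ directly since then $n_0(t)\geq n_-$ holds eventually, and this is exactly the stated conclusion. The main obstacle is the endpoint case $n_-=\liminf_{t\to T_*} n_0(t)$: the argument above only supplies divergence of $\int^{T_*}\delta_0^{-(6-n_-)/2-\eta/2}\,dt$ for every $\eta>0$, and because $\delta_0(t)\to 0$ makes $\delta_0^{-q-\eta/2}$ \emph{larger} than $\delta_0^{-q}$ near $T_*$, the divergence for positive $\eta$ does not transfer to $\eta=0$ by naive comparison. To close this gap I would refine the upper bound by retaining the full $n_0(t)$-dependence in the asymptotic, writing $\widetilde{\mathrm{Li}}(n_0(t)/2-2,e^{-\delta_0(t)})\leq C\,\Gamma(3-n_0(t)/2)\,\delta_0(t)^{-(6-n_0(t))/2}$, and splitting the near-$T_*$ integration range into the set where $n_0(t)\geq n_-$, on which this is already bounded by $C'\,\delta_0(t)^{-(6-n_-)/2}$, and its complement, whose contribution to the BKM integral must be absorbed via the continuity of $n_0$ together with the fact that, by definition of $\liminf$, this complement can only accumulate at $T_*$ along subsequences on which $n_0(t)\to n_-$, which restores the sharp exponent.
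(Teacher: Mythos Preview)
Your approach is essentially the paper's: apply the monotonicity of Lemma~\ref{lem:poly}(i) to replace $n_0(t)$ by a fixed lower index, then use the small-$\delta$ asymptotic of Lemma~\ref{lem:poly}(ii) to extract the power $\delta_0^{-(6-n_-)/2}$, and conclude via Theorem~\ref{thm:main}. The paper's execution is shorter than yours: it skips the $\eta$-perturbation entirely and writes, directly from $n_- \leq \liminf_{t\to T_*} n_0(t)$ and Lemma~\ref{lem:poly}(i),
\[
\int^{T_*} \LiTyG{\tfrac{n_-}{2}-2}{e^{-\delta_0(t)}}\,dt = \infty,
\]
then bounds the integrand by a constant times $\delta_0(t)^{-(6-n_-)/2}$ using Lemma~\ref{lem:poly}(ii). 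Your $\eta>0$ detour is unnecessary for the strict case $n_-<\liminf n_0(t)$ (there one has $n_0(t)\geq n_-$ eventually, so one may set $\eta=0$ from the start), and, as you yourself note, it does not actually deliver the endpoint case either.

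The endpoint difficulty you raise at $n_-=\liminf_{t\to T_*} n_0(t)$ is real: nothing guarantees $n_0(t)\geq n_-$ on a full one-sided neighbourhood of $T_*$, so the monotonicity step is not justified there without further argument. The paper's proof does not address this and tacitly uses $n_0(t)\geq n_-$ near $T_*$. Your proposed fix---splitting the integration into $\{n_0(t)\geq n_-\}$ and its complement and arguing that on the complement $n_0(t)\to n_-$---does not close the gap: on the complement the exponent $(6-n_0(t))/2$ is \emph{larger} than $(6-n_-)/2$, so the pointwise bound goes the wrong way, and the fact that $n_0(t)\to n_-$ along those times does not by itself control the integral. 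In short, your argument is correct and matches the paper for $n_-$ strictly below $\liminf n_0(t)$; at the endpoint both your attempt and the paper's proof share the same unaddressed subtlety.
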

\begin{proof}[Proof]  Let $n_-$ be in $(-\infty, {\displaystyle \liminf_{t \to T_*} n_0(t)}] \cap (-\infty,6).$ From $n_- \leq \liminf_{t \to T_*} n_0(t)$, using Lemma \ref{lem:poly} (i) on equation \eref{eq:impl_BKM} we obtain
\begin{equation*}
\int^{T_*} \LiTyG{\frac{n_-}{2}-2} {{\mathrm{e}}^{-\delta_0(t)}}\,dt = \infty\,. 
\end{equation*}
Now, since $n_-<6$ and the function $\delta_0(t)$ tends to zero as $t \to T_*$, we can use Lemma \ref{lem:poly} (ii) to bound the integrand $\LiTyG{\frac{n_-}{2}-2} {{\mathrm{e}}^{-\delta_0(t)}}$ by a constant times $\left(\frac{1}{\delta_0(t)}\right)^{\frac{6-n_-}{2}},$ which completes the proof. \end{proof}

Finally we consider the hypothetical situation of a finite-time singularity of power-law type, as described in Definition \ref{defn:power-law sing}: $\delta_0(t) \propto (T_*-t)^\Gamma\,,$ with $T_* < \infty.$
\begin{cor}
\label{cor:beta}
Under the hypotheses of Corollary \ref{cor:finite-time-sing}, the solution of the 3D Euler equations has a finite-time singularity at time $T_*<\infty,$ of power-law type with exponent $\Gamma,$ only if
\begin{equation*}
\Gamma \geq \frac{2}{6-n_-}, 
\end{equation*}
for all $n_-$ in $(-\infty, {\displaystyle \liminf_{t \to T_*} n_0(t)}] \cap (-\infty,6).$
\end{cor}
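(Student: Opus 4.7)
The plan is to apply Corollary \ref{cor:finite-time-sing} directly and then substitute the power-law ansatz from Definition \ref{defn:power-law sing} into the resulting integral divergence. Fix any admissible exponent $n_-$, i.e., any $n_- \in (-\infty, \liminf_{t\to T_*} n_0(t)] \cap (-\infty, 6)$. By Corollary \ref{cor:finite-time-sing}, the integral $\int^{T_*} \delta_0(t)^{-(6-n_-)/2}\,dt$ diverges. The goal is to convert this into a constraint on $\Gamma$.

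Next, I would insert the hypothesized behaviour $\delta_0(t) \propto (T_* - t)^{\Gamma}$, which by Definition \ref{defn:power-law sing} holds in some one-sided neighbourhood $(T_*-\eta, T_*)$ of the singular time. Because $\delta_0$ is continuous and strictly positive on the compact complement $[0, T_*-\eta]$, the contribution of that part to the integral is finite, so the divergence necessarily comes from the neighbourhood of $T_*$. On that neighbourhood the integrand is comparable to $(T_*-t)^{-\Gamma(6-n_-)/2}$. Since $6-n_->0$, a one-line change of variables $s = T_* - t$ reduces the question to whether $\int_0^{\eta} s^{-\Gamma(6-n_-)/2}\,ds = \infty$, which by elementary calculus occurs if and only if $\Gamma(6-n_-)/2 \geq 1$, equivalently $\Gamma \geq \frac{2}{6-n_-}$.

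To finish, I would observe that Corollary \ref{cor:finite-time-sing} supplies the divergence for \emph{every} admissible $n_-$, so the inequality $\Gamma \geq 2/(6-n_-)$ must hold for every such $n_-$; this is exactly the stated conclusion. I do not anticipate a serious obstacle: the only delicate point is separating the near-$T_*$ tail (governed by the power law) from the bounded-away-from-$T_*$ part (where continuity of $\delta_0$ keeps the integrand bounded), and this splitting is a routine application of the continuity and positivity assumptions already packaged into Hypothesis \ref{hypo:working}.
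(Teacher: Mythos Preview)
Your argument is correct and is precisely the intended one: the paper's own proof is the single sentence ``The proof follows directly from Corollary~\ref{cor:finite-time-sing},'' and what you have written is exactly the routine unpacking of that sentence---substitute $\delta_0(t)\propto(T_*-t)^\Gamma$ into the divergent integral, isolate the tail near $T_*$, and read off the exponent condition $\Gamma(6-n_-)/2\geq 1$.
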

\begin{proof}[Proof] The proof follows directly from Corollary \ref{cor:finite-time-sing}. \end{proof}

\section{Analysis of analyticity-strip width in terms of BKM theorem}\label{Sec:NewAnal}

\subsection{Quality of bounds}

Several bounds were used in Section \ref{Sec:As_Bkm}.
We now proceed to test their sharpness, when they are applied to the numerical data of Section \ref{Sec:Numerics_Classical}.
\begin{figure}[htbp]
\begin{center}
\includegraphics[height=6.5cm]{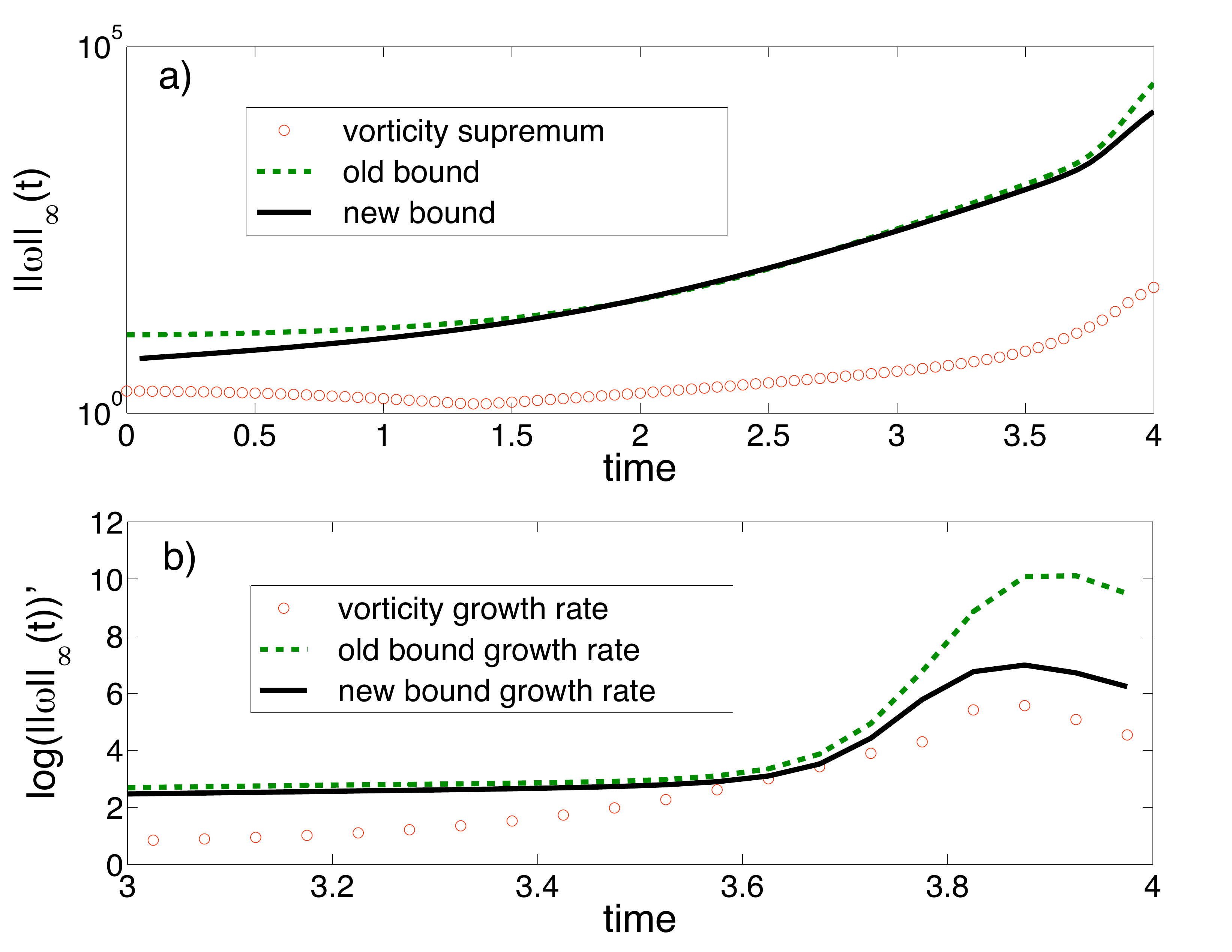}
\caption{(Color online)  Comparison of the bounds for the Taylor-Green flow at resolution $4096^3.$  a) Lin-Log plot, ``old bound'' is the RHS of the inequality (\ref{eq:ineq_old}), taking $\epsilon=0.1$ and $C_{\epsilon} = 3.9$ (see text). ``New bound'' is the RHS of the sharp inequality (\ref{eq:ineq_1}). b) Interpolated time derivative of the logarithms of a), for a time range localized near the change of trend. Same parameters as in a). \label{Fig:Figbounds}}
\end{center}
\end{figure}
Fig. \ref{Fig:Figbounds} shows a comparison of the new inequality (\ref{eq:ineq_1}), and the old inequality (\ref{eq:ineq_old}) taking $\epsilon=0.1$ 
with $C_{\epsilon} = 3.9$. Note that he value of $C_{\epsilon}$ (see Eq. \eqref{eq:Cepsilon}) can be estimated by the integral $\sqrt{\int_{\sqrt{3}}^\infty \pi k^2 k^{-3-2\epsilon} dk}=\sqrt{\pi3^{-\epsilon}/2 \epsilon}$ yielding $C_{\epsilon}\sim 3.75$ at $\epsilon=.1$. A more careful computation of the discrete sum gives $C_{\epsilon} \gtrsim 3.9$, the value used to generate Fig. \ref{Fig:Figbounds}.

The data in Fig. \ref{Fig:Figbounds}(a) displays two important facts: (i) The new bound is sharper than the old bound throughout the computation, particularly at the reliable end of the simulation, $t \gtrapprox 3.7,$ when the three curves show a change of trend and the old bound diverges at a faster rate than the new bound (see also Fig. \ref{Fig:Figbounds}(b)). (ii) Both old and new bounds are not too bad at the beginning of the computation ($t=0$), with an initial ratio of $5:2$ between the new bound and the vorticity supremum norm. Subsequently, the bounds become increasingly less sharp, and the new bound attains a ratio $165:1$ with the vorticity supremum norm at $t=4$. However the slope of the new bound's curve is comparable to the slope of the vorticity-supremum-norm curve.

In order to make a more quantitative comparison of the slopes, Fig. \ref{Fig:Figbounds}(b) shows the logarithmic rates of growth for old bound, new bound and vorticity supremum norm. In that order, these rates satisfy the ratios $7:5:4$ at the resolved time $t \approx 3.85$.

\subsection{Analysis of $\delta$ in terms of BKM}

We now proceed to see if the accelerated decay observed in the decrement $\delta(t)$ and quantified in Fig. \ref{Fig:Fit_Evolution}(d) can correspond to a power-law. To wit, we use the same local $3$-point method than that described in Section \ref{subsec:fit_methods_omegas} (see Eqs. \ref{eq:localfit}, \ref{eq:gamma}  and \ref{eq:Tc}).
The behavior of $g(t)$ is presented in Fig. \ref{Fig:Delta1} and the corresponding $T_*(t)$ and $\Gamma(t)$ are presented in Table \ref{Tab:table_dels}.
\begin{figure}[htbp]
\begin{center}
\includegraphics[height=6.5cm]{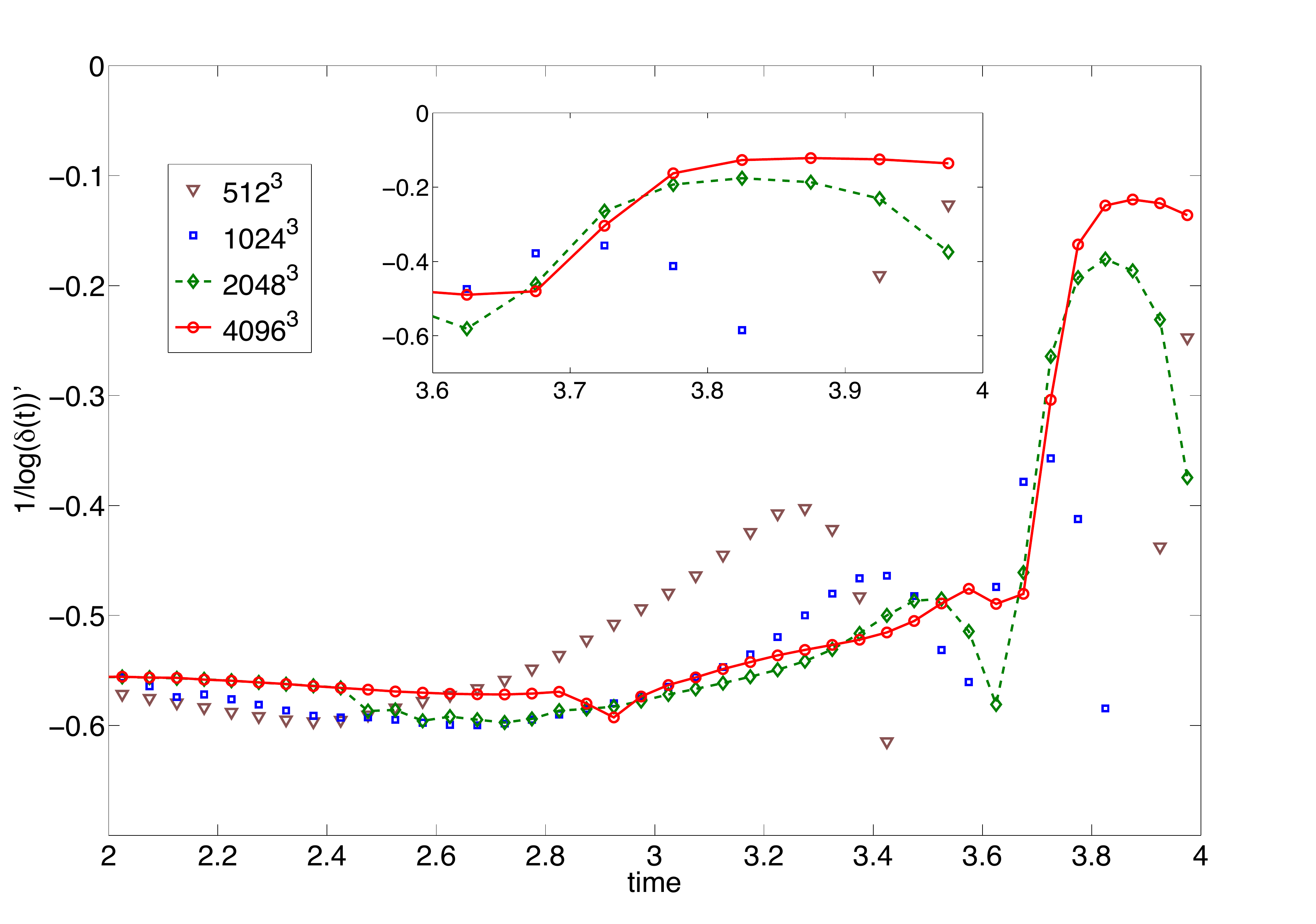}
\caption{(Color online) Temporal evolution of the inverse logarithmic derivative \eqref{eq:localfit} computed from the same values of $\delta$ as in Fig. \ref{Fig:Fit_Evolution}(d); $512^3$ (brown triangles), $1024^3$ (blue squares), $2048^3$ (green diamonds) and $4096^3$ (red circles).}\label{Fig:Delta1}
\end{center}
\end{figure}
\begin{table}[h]
\begin{tabular}{|c||c|c|c|c|}
	\hline
Time &  $\Gamma$   &   $\Gamma$    &   $T_*$      &   $T_*$ \\
         &$3 - k_{\max}$&  $103 - k_{\max}$   &$3  -  k_{\max}$&  $103 -  k_{\max}$\\
	\hline
3.7   & 0.283  & 0.383 & 3.81  &  3.83 \\
3.75 & 0.354 & 0.393 & 3.83  &  3.83 \\
3.8   & 1.41   &  1.36   & 4.00   &  3.97 \\
	\hline
\end{tabular}
\caption{Power-law fit parameters $\Gamma$ and $T_*$ (see Eq. \eqref{eq:sing}) for $\delta(t)$ determined at resolution $4096^3$ on full interval $3< k< k_{\max}$ (same as in Fig. \ref{Fig:Fit_Evolution} and Fig. \ref{Fig:Delta1}) and on subinterval $103< k< k_{\max}$ (see Table \ref{Tab:table_int}).}
\label{Tab:table_dels}
\end{table}

The results for exponent and predicted singular time of Table \ref{Tab:table_dels} have to be read carefully.
Because of the local $3$-point method used to derive them from the data in Table \ref{Tab:table_int}, they use the values of $\delta$ at $t=3.65, 3.7, 3.75, 3.8, 3.85$, the last one being marginally reliable (see Sec.\ref{Sec:NumEFits}).
In fact, they amount to linear $2$-point extrapolation of the data in Fig. \ref{Fig:Delta1} (see the inset): $T_*$ is the intersection of the straight line extrapolation with the time axis and $\Gamma$ is the inverse of the slope.
One can guess that there is room for a power-law type of behavior, with exponent $\Gamma \approx 0.4$ if we consider the data at $t=3.7, 3.75$ and $\Gamma \approx 1.4$ if we include the data at $t=3.8$.

We now use Corollary 11 (see Sec. \ref{Sec:As_Bkm}) to test if these estimates of power-law are consistent with the hypothesis of finite-time singularity. There, the product $\Gamma  (6-n_-)/2$ must be greater than or equal to one if finite-time singularity is to be expected. With the conservative estimate $n_- = 3.9$ obtained by inspection of Fig. \ref{Fig:Fit_Evolution}(b) (or equivalently using the values of $n$ in Table \ref{Tab:table_int}), we obtain that $\Gamma  (6-n_-)/2 < 1$ for the data at $t=3.7$ and $t=3.75$, but $\Gamma  (6-n_-)/2 >1$ for the data at $t=3.8$. These results are insensitive to the fit interval, see Table \ref{Tab:table_dels}.
Therefore, if the latest data is considered,  Corollary 11 cannot be used to negate the validity of the hypothesis of finite-time singularity.
However, there is no sign that the data values of  $\Gamma$ and $T_*$ in Table \ref{Tab:table_dels} are settling down into constants, corresponding to a simple power-law behavior.

Another piece of analysis consists of comparing the singular time predicted from the data for the decrement $\delta(t)$ with the singular time predicted from the direct data for the vorticity supremum norm. They seem both to be close to $T_* \approx 4$ (compare Table \ref{Tab:table_dels} to Table \ref{Tab:table_omegasup}).

In this context, we should perhaps mention Feynman's rule: ``Never trust the data point furthest to the right'', a comment attributed to Richard Feynman, saying basically that he would never trust the last points on an experimental graph, because if the people taking data could have gone beyond that, they would have.
Higher-resolution simulations are clearly needed to investigate whether the new regime is genuinely a power law and not simply a crossover to a faster exponential decay.

Our conclusion for this section is thus similar to that of Sec. \ref{subsec:fit_methods_omegas}: although our late-time reliable data for $\delta(t)$ shows $\Gamma  (6-n_-)/2 > 1$ and is therefore not inconsistent with our Corollary 11, clear power-law behavior of $\delta(t)$ is not achieved.

\section{Conclusions \label{Sec:Conclusion}}

In summary, we presented simulations of the Taylor-Green vortex with resolutions up to $4096^3$. We used the analyticity strip method to analyze the energy spectrum.  We found that, around $t\simeq 3.7$, a (well-resolved up to $t\simeq 3.85$) change of regime is taking place, leading to a faster decay of the width of the analyticity strip $\delta(t)$.
In the same time-interval, preliminary $3D$ visualizations displayed a collision of vortex sheets.
Applying the BKM criterium to the growth of the maximum of the vorticity on the time-interval $3.7<t<3.85$ we found that the occurrence of a singularity around $t\simeq 4$ was not ruled out but that higher-resolution simulations were needed to confirm a clear power-law behavior for $||\omega ||_\infty(t)$.

We introduced a new sharp bound for the supremum norm of the vorticity in terms of the energy spectrum. This bound allowed us to combine the BKM theorem with the analyticity-strip method and to show that a finite-time blowup can exist only if $\delta(t)$ vanishes sufficiently fast. Applying this new test to our highest-resolution numerical simulation we found that the behavior of  $\delta(t)$ is not inconsistent with a singularity. However, due to the rather short time interval on which $\delta(t)$ is both well-resolved and behaving as a power-law, higher-resolution studies are needed to investigate whether the new regime is genuinely a power law and not simply a crossover to a faster exponential decay.

Let us finally remark that our formal assumptions of Section \ref{subsec:main_results} are motivated and to some extent justified by the fact that, in systems that are known to lead to finite-time singularity, the analogous of the working hypothesis \eref{eq:fit_bound} is verified. For the analogy to apply, a version of the BKM theorem must be available. This is the case of the $1$-D inviscid Burgers equation for a real scalar field $u(x,t)$ defined on the torus:
\begin{equation*}
\frac{\partial u}{\partial t} + u \frac{\partial u}{\partial x} = 0 \quad \forall \, x \in [0,2 \pi], \,\forall \, t \in [0,T_*), 
\end{equation*}
which admits a BKM-type of theorem \cite{Bustamante2011}, with singularity time $T_*$ defined by
$\int^{T_*}\|u_x(\cdot,t)\|_{\infty}~dt = \infty$.

In the 1-D case, the analogous of our bound \eqref{eq:ineq_2_TYG} is
 \begin{equation*}
\left\| u_x(\cdot,t) \right\|_{\infty} \leq \tilde{c}\,{\displaystyle \sum_{k=1}^\infty k \sqrt{E(k,t)}} \,.  
 \end{equation*}

 Using the simple trigonometric initial data $u(x,0)=\sin(x)$, the energy spectrum can be expressed in terms of Bessel functions that admit simple asymptotic expansions. It is straightforward to show (see \cite{SulemSulemFrisch1983} for details) that, for $t<T_*=1$, one has the large-$k$ asymptotic expansion
\begin{equation*}
 E(k,t)\sim  \frac{1}{\pi  t^2 \sqrt{1-t^2} } k^{-3} e^{-2  \delta_S(t) k},
\end{equation*}
with
\begin{equation*}
\delta_S(t)= \log \left(\frac{\sqrt{1-t^2}+1}{t}\right)-\sqrt{1-t^2},
\end{equation*}
while, at $t=T_*=1$,
\begin{equation*}
E(k,1)\sim \frac{2\ 6^{2/3}}{ \Gamma \left(-\frac{1}{3}\right)^2} k^{-8/3}.
\end{equation*}
In fact, the $ k^{-8/3}$ 
power law appears already before $T_*$
(see the remark following Eq. (3-10) of reference \cite{SulemSulemFrisch1983}).

It is easy to check that the analytical solution admits, for all $k$ and for all $t$ sufficiently close to $T_*$, a working hypothesis \eref{eq:fit_bound} of the form 
\begin{equation*}
E(k,t) \leq M\,k^{-n_0}\,\exp(-2\,\delta_0(t)\,k),
\end{equation*}
with analytically-obtainable functions $n_0(t) = 8/3$ and $\delta_0(t) \propto (T_* - t)^\Gamma$ with $\Gamma = 3/2$. The analogous of Corollary \ref{cor:beta} gives the inequality
\begin{equation*}
\Gamma \geq \frac{2}{4 - n_0}\,,
\end{equation*}
which is saturated by the analytically-obtained exponents $n_0 = 8/3$, $\Gamma = 3/2$.

\section{Acknowledgements}
We acknowledge useful scientific discussions with Annick Pouquet, Uriel Frisch and Giorgio Krstulovic who also helped us with the visualizations of Fig. \ref{Fig:Vort_3D_Viz}.
The computations were carried out at IDRIS (CNRS). Support for this work was provided by UCD Seed Funding projects SF304 and SF564, and IRCSET Ulysses project ``Singularities in three-dimensional Euler equations: simulations and geometry''.

\appendix

\section{Extension to general periodic flows \label{Ap:Gen_per}}
Here we provide the generalization to non TG-symmetric periodic flows of the results presented in Section \ref{subsec:main_results}. Definition \ref{defn:spectrum} and the working hypothesis (Hypothesis \ref{hypo:working}) are modified slightly in the general case. Accordingly, the new bounds leading to Lemma \ref{lem:main} and Theorem \ref{thm:main} need to be modified slightly to accommodate the general case. The crucial derived relations between $\delta_0$ and $n_0$ in Lemma \ref{lem:strong} and Corollaries \ref{cor:finite-time-sing} and \ref{cor:beta} will apply directly to the general periodic case and will not be discussed.

The main technical difference is that the new bounds presented in Section \ref{subsec:main_results} apply for a flow with TG symmetries (see Section \ref{subsec:symm}) which imply that only modes with even-even-even and odd-odd-odd wavenumber components are populated. The general periodic case does not follow this restriction, which slightly modifies the bounds. We will assume, to simplify matters, that the so-called zero-mode of the velocity field is identically zero:
\begin{equation*}
 \widehat{\mbf{u}}(\mbf{0},t) = \mbf{0}\,, \quad \forall \,t \in [0,T),
\end{equation*}
 \\
Notice that all remaining wave numbers are populated. This means that all sums involving the scalar $k$ in equations  (\ref{eq:ineq_1}) and (\ref{eq:ineq_2_TYG}) will start effectively from $k=1.$

Also, because  modes with mixed even-odd wavenumber components are allowed, the definitions of $S_k$ in Lemma 2 and  constant $c$ in equation (\ref{eq:ineq_2_TYG}) must be replaced by more appropriate quantities. Therefore, the corresponding general periodic versions of Lemma \ref{lem:main} (equation (\ref{eq:ineq_1})) and practical bound (equation (\ref{eq:ineq_2_TYG})) are:\\

\noindent \textbf{Lemma \ref{lem:main}' (general periodic version of Lemma \ref{lem:main}).} \emph{Let $\mbf{u}(\mbf{x},t)$ be a velocity field with energy spectrum defined by equation \eref{eq:spectrum} and let $\bomega = \nabla \times \mbf{u}$ be its vorticity, defined on the periodicity domain $D=[0,2\,\pi]^3.$ Then the following inequality is verified for all times $t \in [0,T)$ when the sum in the RHS is defined, and independently of any evolution equation that $\mbf{u}$ might satisfy:}
\begin{eqnarray}
\label{eq:ineq_1_GP}
 \left\| \bomega(\cdot,t) \right\|_{\infty} &\leq& {\displaystyle \sum_{k=1}^\infty \,\sqrt{2\,k(k+1)\, E(k,t)\,S'_k}} \,,
\end{eqnarray}
\emph{where $S'_k \equiv \#\{\mbf{k} \in \mathbb{Z}^3 : k-1/2 < |\mbf{k}| < k+1/2\}$ is the number of lattice points in a spherical shell of width 1 and radius $k \in \mathbb{Z}_+$.}\\

\noindent \textbf{Practical bound, general case.} \begin{eqnarray}
\label{eq:ineq_2_GP}
 \left\| \bomega(\cdot,t) \right\|_{\infty} \leq c'\,{\displaystyle \sum_{k=1}^\infty k^2\sqrt{E(k,t)}} \,,
\end{eqnarray}
where $c'=6\sqrt{2}$.

We can easily check that the bounds for Taylor-Green, equations (\ref{eq:ineq_1}) and (\ref{eq:ineq_2_TYG}), are sharper (by a factor close to 2) to their respective general bounds, equations (\ref{eq:ineq_1_GP}) and (\ref{eq:ineq_2_GP}).

Finally, Theorem \ref{thm:main} is replaced by\\

\noindent \textbf{Theorem \ref{thm:main}'.} \emph{Let a solution of the 3D Euler equations satisfy the working hypothesis \eref{eq:fit_bound} with $k=1$ included. Then the maximal regularity time $T_*$ of the solution must satisfy}
\begin{equation*}
\int_0^{T_*} \Li{\frac{n_0(t)}{2}-2} {{\mathrm{e}}^{-\delta_0(t)}}\,dt = \infty.
\end{equation*}

\bibliographystyle{apsrev}

\end{document}